\documentclass[journal]{IEEEtran}
	

\usepackage{caption}
\usepackage{subcaption}
\usepackage{mathrsfs}
\usepackage[cmex10]{amsmath}
\usepackage{cite,graphicx,epsfig,wrapfig,amsfonts,amssymb,algorithmic,threeparttable,color,url}
\usepackage{mdwmath,multirow}
\usepackage{mdwtab}
\usepackage[ruled,vlined]{algorithm2e}
\usepackage{amsthm}

\newtheorem{theorem}{Theorem}
\newtheorem{lemma}{Lemma}

\newtheorem{corollary}{Corollary}

\theoremstyle{definition}

\setcounter{secnumdepth}{3}
\graphicspath{{./}{figures/}}

\ifCLASSINFOpdf

\else

\fi

\hyphenation{op-tical net-works semi-conduc-tor}

\begin{document}
\title{A Traffic Load Balancing Framework for Software-defined Radio Access Networks Powered by Hybrid Energy Sources}
\author{\IEEEauthorblockN{Tao Han, \emph{{Student Member, IEEE}}, and
Nirwan Ansari, \emph{{Fellow, IEEE}}}\\
\IEEEauthorblockA{Advanced Networking Laboratory \\
Department of Electrical and Computer Engineering \\
New Jersey Institute of Technology, Newark, NJ, 07102, USA\\
Email: \{th36, nirwan.ansari\}@njit.edu}
\thanks{This work was supported in part by NSF under grant no. CNS-1218181 and no. CNS-1320468.}}
\maketitle
\begin{abstract}
Dramatic mobile data traffic growth has spurred a dense deployment of small cell base stations (SCBSs). Small cells enhance the spectrum efficiency and thus enlarge the capacity of mobile networks. Although SCBSs consume much less power than macro BSs (MBSs) do, the overall power consumption of a large number of SCBSs is phenomenal. As the energy harvesting technology advances, base stations (BSs) can be powered by green energy to alleviate the on-grid power consumption. For mobile networks with high BS density, traffic load balancing is critical in order to exploit the capacity of SCBSs. To fully utilize harvested energy, it is desirable to incorporate the green energy utilization as a performance metric in traffic load balancing strategies. In this paper, we have proposed a traffic load balancing framework that strives a balance between network utilities, e.g., the average traffic delivery latency, and the green energy utilization. Various properties of the proposed framework have been derived. Leveraging the software-defined radio access network architecture, the proposed scheme is implemented as a virtually distributed algorithm, which significantly reduces the communication overheads between users and BSs. The simulation results show that the proposed traffic load balancing framework enables an adjustable trade-off between the on-grid power consumption and the average traffic delivery latency, and saves a considerable amount of on-grid power, e.g., 30\%, at a cost of only a small increase, e.g., 8\%, of the average traffic delivery latency.

\end{abstract}
\IEEEpeerreviewmaketitle

\section{Introduction}
\label{sec:introduction}
Proliferation of wireless devices and bandwidth greedy applications drive the exponential growth of mobile data traffic that leads to a continuous surge in capacity demands across mobile networks. Heterogeneous network (HetNet) is one of the key technologies for enhancing mobile network capacity to satisfy the capacity demands \cite{Andrews:2014:AOLB}. In HetNet, low-power base stations referred to as small cell base stations (SCBSs) are densely deployed to enhance the spectrum efficiency of the network and thus increase the network capacity. Owing to the disparate transmit powers and base station (BS) capabilities, traditional user association metrics such as the signal-to-interference-plus-noise ratio (SINR) and the received-signal-strength-indication (RSSI) may lead to a severe traffic load imbalance \cite{Andrews:2014:AOLB}. Hence, user association algorithms should be well designed to balance traffic loads and thus to fully exploit the capacity potential of HetNet.

In order to maximize network utilities, balancing traffic loads requires coordination among BSs. The dense deployment of BSs in HetNet increases the difficulty on coordinating BSs. To address this issue, software-define radio access network (SoftRAN) architecture \cite{Gudipati:2013:SSD} has been proposed. SoftRAN enables coordinated radio resource management in the centralized control plane with a global view of network resources and traffic loads. The user association algorithm leveraging the SoftRAN architecture is desired for future mobile networks with an extremely dense BS deployment.

Owing to the direct impact of greenhouse gases on the earth environment and the climate change, the energy consumption of Information and Communications Technology (ICT) is becoming an environmental and thus social and economic issue. Mobile networks are among the major energy hoggers of communication networks, and their contributions to the global energy consumption increase rapidly. Therefore, greening mobile networks is crucial to reducing the carbon footprints of ICT. Although SCBSs consume less power than macro BSs (MBSs), the number of SCBSs will be orders of magnitude larger than that of MBSs for a wide scale network deployment. Hence, the overall power consumption of such a large number of SCBSs will be phenomenal. Greening HetNets have thus attracted tremendous research efforts \cite{Han:2012:OGC,Hasan:2011:GCN}.

As energy harvesting technologies advance, green energy such as sustainable biofuels, solar and wind energy can be utilized to power BSs  \cite{Han:2014:PMN}. Telecommunication companies such as Ericsson and Nokia Siemens have designed green energy powered BSs for mobile networks \cite{Ericson:2007:SEU}. By adopting green energy powered BSs, mobile network operators (MNOs) may further save on-grid power consumption and thus reduce their $CO_{2}$ emissions. However, since the green energy generation is not stable, green energy may not be a reliable energy source for mobile networks. Therefore, future mobile networks are likely to adopt hybrid energy supplies: on-grid power and green energy. Green energy is utilized to reduce the on-grid power consumption and thus reduce the $CO_{2}$ emissions while on-grid power is utilized as a backup power source.

In HetNets with hybrid energy supplies, the utilization of green energy should be integrated into user association metrics to optimize the green energy usage. For instance, while balancing traffic loads, MNOs may enable BSs with sufficient green energy to serve more traffic loads while reducing the traffic loads of BSs consuming on-grid power \cite{Han:2013:OOG}. The traffic load balancing with the consideration of green energy may not maximize network utilities such as the network capacity and the traffic delivery latency. Therefore, a trade-off between the green energy utilization and network utilities should be carefully evaluated in balancing traffic loads among BSs. In addition, as a result of the trade-off, users' utilities such as data rates and the service latency may be decreased because of the consideration of green energy in the traffic load balancing. Thus, users may not cooperate in the traffic load balancing. For example, a distributed user association algorithm may involve multiple interactions between users and BSs and require users to report their measurements to BSs \cite{Ye:2013:UAL,Han:2013:GALA}. Seeking to improve their own utilities, they may not report the correct information to BSs. Therefore, it is desirable to hide BSs' energy information from users to avoid counterfeit reports.

In this paper, we propose a virtually distributed user association scheme that leverages the SoftRAN concept. We generate virtual users and virtual BSs (vBSs) in the radio access networks controller (RANC) to emulate a distributed user association solution that requires iterative user association adjustments between users and BSs. This scheme runs the user association optimization in the RANC, and thus significantly reduces the communication overhead over the air interface. In this scheme, users report their downlink data rates calculated based on perceived SINRs via an associating BS to the RANC where traffic loads from individual users and BSs are measured. The RANC optimizes the BS operation status that reflects the price for a user to access a BS. The user association is determined by the BS operation status and the users' downlink data rates. The proposed scheme, in determining user association, allows an adaptable trade-off between network utilities, e.g., the average traffic delivery latency and the green energy utilization. Meanwhile, running the user association within the RANC avoids leaking energy information to users. As a result, users have no obvious incentives to counterfeit reports. Based on the above features, we name the proposed user association scheme as vGALA: \textbf{v}irtualized \textbf{G}reen energy \textbf{A}ware and \textbf{L}atency \textbf{A}ware user association \footnote{The initial idea about green energy aware and latency aware user association was presented at GLOBECOM 2013 \cite{Han:2013:GALA}.}.

The rest of the paper is organized as follows. In Section \ref{sec:related_work}, we briefly review related works. In Section \ref{sec:sys_model}, we define the system model and formulate the user association problem. Section \ref{sec:distributed_scheme} presents the vGALA scheme. Section \ref{sec:tradeoff_admission} discusses the practicality of the vGALA scheme. Section \ref{sec:simulation} shows the simulation results, and concluding remarks are presented in Section \ref{sec:conclusion}.

\section{Related Works}
\label{sec:related_work}
Balancing traffic loads in HetNet has been extensively studied in recent years \cite{Wang:2013:MMN}. In mobile networks, traffic loads among BSs is balanced by executing handover procedures. In the LTE system, there are three types of handover procedures: Intra-LTE handover, Inter-LTE handover, and Inter-RAT (radio access technology) handover \cite{LTE:2013:handover}. There are two ways to trigger handover procedures. The first one is ``Network Evaluated" in which the network triggers handover procedures and makes handover decisions. The other one is ``Mobile Evaluated" in which a user triggers the handover procedure and informs the network about the handover decision. Based on the radio resource status, the network decides whether to approve the user's handover request. In 4G and LTE networks, a hybrid approach is usually implemented where a user measures parameters of the neighboring cells and reports the results to the network. The network makes the handover decision based on the measurements. Here, the network can decide which parameters should be measured by users.

Aligning with the above procedures, various traffic load balancing algorithms have been proposed to optimize the network utilities. The most practical traffic load balancing approach is the cell range expansion (CRE) technique that biases users' receiving SINRs or data rates from some BSs to prioritize these BSs in associating with users \cite{Damn:2011:S3GPP}. Owing to the transmit power difference between MBSs and SCBSs, a large bias is usually given to SCBSs to offload users to small cells \cite{Andrews:2014:AOLB}. By applying CRE, a user associates with the BS from which the user receives the maximum biased SINR or data rate. Although CRE is simple, it is challenging to derive the optimal bias for BSs. Singh \emph{et al.} \cite{Singh:2013:OHN} provided a comprehensive analysis on traffic load balancing using CRE in HetNet. The authors investigated the selection of the bias value and its impact on the SINR coverage and the downlink rate distribution in HetNet.

The traffic load balancing problem can also be modeled as an optimization problem and solved by convex optimization approaches. Ye \emph{et al.} \cite{Ye:2013:UAL} modeled the traffic load balancing problem as a utility maximization problem and developed distributed user association algorithms based on the primal-dual decomposition. Kim \emph{et al.} \cite{Kim:2012:DOU} proposed an $\alpha$-optimal user association algorithm to achieve flow level load balancing under spatially heterogeneous traffic distribution. The proposed algorithm may maximize different network utilities, e.g., the traffic latency and the network throughput, by properly setting the value of $\alpha$. In addition, game theory has been exploited to model and solve the traffic load balancing problems. Aryafar \emph{et al.} \cite{Aryafar:2013:RSG} modeled the traffic load balancing problem as a congestion game in which users are the players and user association decisions are the actions.

The above solutions, though effectively balance the traffic loads to maximize the network utilities, do not consider the green energy utilization as a performance metric in balancing traffic loads. As green energy technologies advance, powering BSs with green energy is a promising solution to save on-grid power and reduce the carbon footprints \cite{Han:2014:PMN}. It is desirable to recognize green energy as one of the performance metrics when balancing the traffic loads. Zhou \emph{et al.} \cite{Zhou:2010:ESA:} proposed a handover parameter tuning algorithm for target cell selection, and a power control algorithm for coverage optimization to guide mobile users to access the BSs with renewable energy supply. Considering a mobile network powered by multiple energy sources, Han and Ansari \cite{Han:2013:OOG} proposed to optimize the utilization of green energy for cellular networks by optimizing BSs' transmit powers. The proposed algorithm achieves significant on-grid power savings by scheduling the green energy consumption along the time domain for individual BSs, and balancing the green energy consumption among BSs. The authors have also proposed a user association algorithm that jointly optimizes the average traffic delivery latency and the green energy utilization \cite{Han:2013:GALA}.

\section{System Model and Problem Formulation}
\label{sec:sys_model}
In this paper, we consider a HetNet with multiple MBSs and SCBSs as shown in Fig. \ref{fig:heter_net_arch}. Both the MBSs and SCBSs are powered by on-grid power and green energy. We consider solar power as the green energy source. We focus on balancing the downlink traffic loads among BSs by designing the green energy and latency aware user association scheme. We adopt a software-defined radio access network (SoftRAN) architecture in which all BSs are controlled by the RAN controller (RANC). The RANC has a global view of BSs' traffic loads and green energy. The user association is optimized by the RANC. The specific design of the RANC is beyond the scope of this paper.
\begin{figure}[htb]
\centering
\includegraphics[scale=0.25]{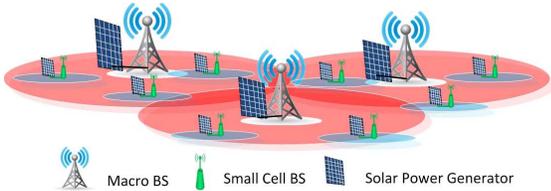}
\caption{A HetNet powered by hybrid energy sources: on-grid power and green energy.}
\label{fig:heter_net_arch}
\vspace{-20pt}
\end{figure}

\subsection{Traffic model}
Denote $\mathcal{B}$ as a set of BSs including both the MBS and SCBSs. We assume that the traffic arrives according to a Poisson point process with the average arrival rate per unit area at location $x$ equaling to $\lambda(x)$, and the traffic size (packet size) per arrival has a general distribution with the average traffic size of $\nu(x)$. Assuming a mobile user at location $x$ is associated with the $j$th BS, then the user's downlink data rate $r_{j}(x)$ that will end up becoming available to the user can be generally expressed as a logarithmic function of the perceived SINR, $SINR_{j}(x)$, according to the Shannon-Hartley theorem~\cite{Kim:2012:DOU},
\begin{equation}
\label{eq:user_rate}
r_{j}(x)=W_{j}log_{2}(1+SINR_{j}(x)),
\end{equation}
where $W_{j}$ is the total bandwidth in the $j$th BS.
\begin{equation}
\label{eq:user_SINR}
SINR_{j}(x)=\frac{P_{j}g_{j}(x)}{\sigma^{2}+\sum_{k \in \mathcal{I}_{j}}I_{k}(x)}.
\end{equation}
Here, $P_{j}$ is the transmission power of the $j$th BS, $\mathcal{I}_{j}$ represents the set of interfering BSs which is defined as the set of BSs whose transmission interferes the $j$th BS's transmission toward a user at location $x$, $I_{k}(x)$ is the average interference power seen by a user at location $x$ from the $k$th BS, $\sigma^{2}$ denotes the noise power level and $g_{j}(x)$ is the channel gain between the $j$th BS and the user at location $x$. Here, the channel gain reflects only the slow fading including the path loss and the shadowing. We assume the channel gain is measured at a large time scale, and thus fast fading is not considered.

In HetNet, the total bandwidth in a BS is determined by the network's frequency planning. Different frequency reuse strategies result in different inter-BS interference. In this paper, we assume the network's frequency reuse strategy is given and static. Thus, $\mathcal{I}_{j}$ contains the set of BSs who share the same spectrum with the $j$th BS. We assume users experience a roughly static interference from the interfering BSs. Although the inter-BS interference in HetNet varies depending on the activities in the interfering BSs, the interference can be well coordinated via time domain techniques, frequency domain techniques and power control techniques \cite{Lopez:2011:EICIC}. Therefore, the inter-BS interference can be reasonably modeled as a static value for analytical simplicity. The static inter-BS interference model has also been adopted in previous works for modeling the user association problem \cite{Son:2009:DAL, Kim:2012:DOU}.

The average traffic load density at location $x$ in the $j$th BS is
\begin{equation}
\label{eq:bs_point_load}
\varrho_{j}(x)=\frac{\lambda(x)\nu(x)\eta_{j}(x)}{r_{j}(x)}
\end{equation}
Here, $\eta_{j}(x)$ is an indicator function. If $\eta_{j}(x)=1$, the user at location $x$ is associated with the $j$th BS; otherwise, the user is not associated with the $j$th BS.
Assuming mobile users are uniformly distributed in the area and denoting $\mathcal{A}$ as the coverage area of all the BSs, based on Eq.~(\ref{eq:bs_point_load}), we derive the average traffic loads in the $j$th BS expressed as
\begin{equation}
\label{eq:bs_load}
\rho_{j}=\int_{x \in \mathcal{A}}\varrho_{j}(x)dx.
\end{equation}
The value of $\rho_{j}$ indicates the fraction of time during which the $j$th BS is busy.

We assume that traffic arrival processes at individual locations are independent. Since the traffic arrival per unit area is a Poisson point process, the traffic arrival in the $j$th BS, which is the sum of the traffic arrivals in its coverage area, is a Poisson process. The required service time per traffic arrival for a user at location $x$ in the $j$th BS is $\gamma_{j}= \frac{\nu(x)}{r_{j}(x)}$.
Since $\nu(x)$ is the average traffic size per arrival which follows a general distribution, the user's required service time is also a general distribution. Hence, a BS's service rate follows a general distribution. Therefore, a BS's downlink transmission process realizes a M/G/1 processor sharing queue, in which multiple users share the BS's downlink radio resource \cite{Kleinrock:1976:QS}.

In mobile networks, various downlink scheduling algorithms have been proposed to enable proper sharing of the limited radio resource in a BS \cite{Capozzi:2013:DPS}. These algorithms are designed to maximize the network capacity, enhance the fairness among users, or provision QoS services. According to the scheduling algorithm, users are assigned different priorities on sharing the downlink radio resource. As a result, users in different priority groups perceive different average waiting time. Since traffic arrives at a BS according to Possion arrival statistics, the allowed variation in the average waiting times among different priority groups is constrained by the Conservation Law \cite{Kleinrock:1976:QS}. The integral constraint on the average waiting time in the $j$th BS can be expressed as
\begin{equation}
\label{eq:conservation_law}
\bar{L}_{j}=\frac{\rho_{j}E(\gamma_{j}^{2})}{2(1-\rho_{j})}.
\end{equation}
This indicates that given the users' required service time in the $j$th BS, if the scheduling algorithm gives some users higher priority and reduces their average waiting time, it will increase the average waiting time of the other users. Therefore, $\bar{L}_{j}$ generally reflects the $j$th BS's performance in terms of users' average waiting time. 
Since $E(\gamma_{j}^{2})$ mainly reflects the traffic characteristics, we assume that $E(\gamma_{j}^{2})$ is roughly constant during a user association process and thus $\vartheta_{j}=\frac{E(\gamma_{j}^{2})}{2}$ can be considered as a constant. Thus, we adopt
\begin{equation}
\label{eq:latency_indicator}
L(\rho_{j})=\frac{\vartheta_{j}\rho_{j}}{1-\rho_{j}}
\end{equation}
as a general latency indicator for the $j$th BS. A smaller $L(\rho_{j})$ indicates that the $j$th BS introduces less latency to its associated users. Therefore, we use $L(\rho_{j})$ to reflect the $j$th BS's average traffic delivery latency.
\subsection{Energy model}
\begin{figure}
\centering
\includegraphics[scale=0.3]{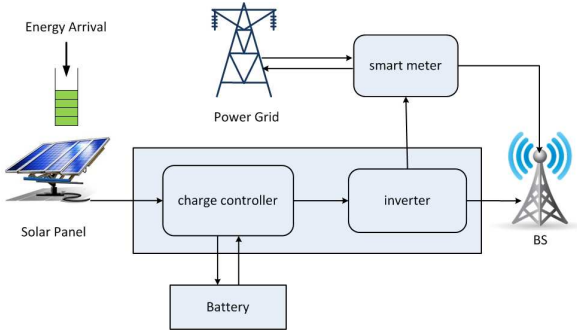}
\caption{A hybrid energy powered BS.}
\label{fig:green_bs}
\vspace{-16pt}
\end{figure}
In the network, both MBSs and SCBSs have their own solar panels for generating green energy. Therefore, BSs are powered by hybrid energy sources: on-grid power and green energy. If green energy generated by its solar panel is not sufficient, the BS consumes on-grid power. Since MBSs usually consume more energy than SCBSs, we assume that MBSs are equipped with larger solar panels that have a higher energy generation capacity than that of a SCBS. A reference design of a hybrid energy powered BS \cite{Han:2014:PMN} is shown in Fig \ref{fig:green_bs}. The charge controller optimizes the green energy utilization based on the solar power intensity, the power consumption of BSs, and energy prices on power grid. Here, the green energy utilization is optimized over time horizon. For example, the charge controller may predict the solar power intensity and mobile traffic loads in a BS over a certain period of time, e.g., 24 hours. The prediction can be based on statistical data and real time weather forecasts. The charge controller according to the prediction determines how much green energy should be utilized to power a BS during a specific time period, e.g., the time duration between two consecutive traffic load balancing procedures.

In this paper, instead of investigating how to optimize the green energy utilization over the time horizon, we aim to study how to balance traffic loads among BSs to save on-grid energy within the duration of a traffic balancing procedure. Therefore, we assume that the amount of available green energy for powering a BS is a constant within this duration as determined by the charge controller. It is reasonable to assume that the available green energy is constant because the traffic load balancing process is at a time scale of several minutes \cite{Kim:2012:DOU} while solar power generation is usually modeled at a time scale of an hour \cite{Farbod:2007:RAO}. Denote $e_{j}$ as the amount of green energy for powering the $j$th BS in a traffic load balancing procedure. If the power consumption of the $j$th BS is larger than $e_{j}$, the BS consumes on-grid power. Otherwise, the residual green energy will be either stored in battery for future usage or uploaded to power grid via the smart meter. Since we are not focusing on optimizing the green energy utilization over the time horizon, we simply model the BS's on-grid energy consumption is zero when the BS's power consumption is less than $e_{j}$. In other words, we do not consider the redistribution of the residual green energy in our model.

The BS's power consumption consists of two parts: the static power consumption and the dynamic power consumption~\cite{Auer:2011:HME}. The static power consumption is the power consumption of a BS without carrying any traffic load. The dynamic power consumption refers to the additional power consumption caused by traffic loads in the BS, which can be well approximated by a linear function of the traffic loads~\cite{Auer:2011:HME}. Denote $p^{s}_{j}$ as the static power consumption of the $j$th BS. Then, the $j$th BS's power consumption can be expressed as
\begin{equation}
\label{eq:bs_power_consumption}
p_{j}=\beta_{j}\rho_{j}+p^{s}_{j}.
\end{equation}
Here, $\beta_{j}$ is the load-power coefficient that reflects the relationship between the traffic loads and the dynamic power consumption in the $j$th BS. The BS power consumption model can be adjusted to model the power consumption of either MBSs or SCBSs by incorporating and tweaking the static power consumption and the load-power
coefficient. The on-grid power consumption in the $j$th BS is
\begin{equation}
\label{eq:bs_grid_energy}
p^{o}_{j}=\max{(p_{j}-e_{j},0)}.
\end{equation}

\subsection{Problem formulation}
In determining the user association, the network aims to strive for a trade-off between network utilities, e.g., the average traffic delivery latency and the on-grid power consumption. In this paper, we focus on designing a user association algorithm to enhance the network performance by reducing the average traffic delivery latency in BSs as well as to reduce the on-grid power consumption by optimizing the green energy usage.

On the one hand, to reduce the average traffic delivery latency, the network desires to minimize the summation of the latency indicators of BSs. On the other hand, since BSs are powered by both green energy and on-grid power, the network seeks to minimize the usage of on-grid power by optimizing the utilization of green energy.
According to Eq.~(\ref{eq:bs_grid_energy}), on-grid power is only consumed when green energy is not sufficient in the BS. When $p_{j}>e_{j}$, to alleviate on-grid power consumption, the $j$th BS has to reduce its traffic loads. We define the green traffic capacity as the maximum traffic loads that can be supported by green energy. Denote $\hat{\rho}_j$ as the green traffic capacity of the $j$th BS. Then,
\begin{equation}
\label{eq:green_load}
\hat{\rho}_{j}=\max{(\epsilon,\min(\frac{e_{j}-p^{s}_{j}}{\beta_{j}},1-\epsilon))}.
\end{equation}
Here, $\epsilon$ is an arbitrary small positive constant to guarantee $0<\hat{\rho}_{j}<1$.
To reduce traffic loads from $\rho_{j}$ to $\hat{\rho}_{j}$, the $j$th BS has to shrink its coverage area. As a result, its traffic loads are offloaded to its neighboring BSs and may lead to traffic congestion in the neighboring BSs. The traffic congestion increases the average traffic delivery latency of the network. To achieve a trade-off between the average traffic delivery latency and the on-grid power consumption, we define the energy-latency coefficient in the $j$th BS as $\theta_{j}$. We further define the desired traffic loads in the $j$th BS after the energy-latency trade-off as
\begin{equation}
\label{eq:tradeoff_load}
\tau_{j}=(1-\theta_{j})\rho_{j}+\theta_{j}\hat{\rho}_{j}.
\end{equation}
Here, $0\leq \theta_{j} \leq 1$.
If $\theta_{j}$ is set to zero, the $j$th BS's desired traffic loads are its actual traffic loads without considering green energy. In this case, we consider the $j$th BS being latency-sensitive; otherwise, if $\theta_{j}$ equal to one, the $j$th BS's desired traffic loads are dominated by its green traffic capacity and thus the BS is energy-sensitive. The selection of $\theta_{j}$ reflects the $j$th BS's energy-latency trade off that will be discussed in Section \ref{subsec:energy_latency_trade}. We assume $\theta_{j}$ remains constant within the duration of a user association process.

Since mobile devices are powered by battery, it is desirable to guarantee the energy efficiency of mobile devices while performing the traffic load balancing~\cite{Raj:2013:EAM}. To ensure the energy efficiency of mobile devices, we restrict a user to only associate with the BSs to which the user's uplink pathloss is smaller than a predefined threshold.
Considering all the above factors, the user association (UA) problem is formulated as
\begin{eqnarray}
\label{eq:object_network_goal}
\min_{\boldsymbol{\rho}} && \sum_{j \in \mathcal{B}}w_{j}(\rho_{j})L(\rho_{j})\\
\label{eq:constraint_omge}
subject\; to: && 0\leq\rho_{j}\leq 1-\epsilon. \nonumber\\
&& (\alpha_{j}(x)-\alpha^{*}(x))\eta_{j}(x)\leq 0,\nonumber\\
&& \forall x \in \mathcal{A},j \in \mathcal{B} .
\end{eqnarray}
Here, $\alpha_{j}(x)$ and $\alpha^{*}(x)$ are the uplink pathloss from the user at location $x$ to the $j$th BS and the uplink pathloss threshold for the user, respectively. $0<\epsilon<1$ is a small real number to ensure $\rho_{j}<1$. $\boldsymbol{\rho}=(\rho_{1}, \rho_{2}, \cdots, \rho_{|\mathcal{B}|})$, and 
\begin{align}
\label{eq:latency_weight}
w_{j}(\rho_{j})&=e^{\kappa(\rho_{j}-\tau_{j})}\nonumber\\
&=e^{\kappa(\rho_{j}-(1-\theta_{j})\rho_{j}-\theta_{j}\hat{\rho}_{j})}\nonumber\\
&=e^{\kappa\theta_{j}(\rho_{j}-\hat{\rho}_{j})}
\end{align}
In the objective function, $w_{j}(\rho_{j})$ indicates the weight of the $j$th BS's latency indicator. If the $j$th BS has sufficient green energy ($\hat{\rho}_{j}\geq\rho_{j}$), ${0<w_{j}(\rho_{j})\leq 1}$; otherwise, $w_{j}(\rho_{j})>1$. This is because when the amount of available green energy in the $j$th BS is sufficient, the green traffic capacity, $\hat{\rho}_{j}$, is larger than $\rho_{j}$. Then, $\tau_{j}>\rho_{j}$ and $w_{j}<1$. With a large weight, the $j$th BS has a high priority in reducing its latency indicator while minimizing Eq.~(\ref{eq:object_network_goal}) as compared with the BSs having a small weight. Therefore, as compared with $w_{j}(\rho_{j})\leq 1$, $w_{j}(\rho_{j})>1$ enables the $j$th BS to achieve a smaller latency indicator.
Since
\begin{equation}
\label{eq:deritive_L}
\frac{dL(\rho_{j})}{d\rho_{j}}=\frac{\vartheta_{j}}{(1-\rho_{j})^2}>0,
\end{equation}
a smaller latency indicator means less traffic loads in the $j$th BS, which is desirable for saving on-grid power in the $j$th BS. Thus, introducing the weights for BSs' latency indicator in the objective function enables the green energy aware and traffic delivery latency aware user association. $\kappa$ is a parameter that further adjusts the value of the weight according to that of the traffic latency indicator and enables the network to control the trade-off between the on-grid power consumption and the average traffic delivery latency.


\section{vGALA: a Green Energy and Latency Aware Load Balancing Scheme}
\label{sec:distributed_scheme}
In this section, we present the vGALA scheme and prove its properties. The vGALA scheme generally consists of three phases. The first phase is the initial user association and network measurement, during which the RANC collects network information, e.g., available green energy, traffic loads, and users' data rates. The second phase is the user association optimization, in which the RANC optimizes the user association and derives the corresponding BSs' operation statuses based on the information collected in the first phase. Here, a BS's operation status reflects the price for a user to access the BS. In the third phase, the user association is determined based on the optimized BSs' operation statuses and users' downlink data rates. The major optimization of the vGALA scheme is in the second phase. To be analytically tractable, we assume that (1) the RANC can successfully collect the network information from all BSs and users, and (2) the users' data rates do not change within one user association process. We will evaluate these assumptions in the next section where we discuss the practicality of the vGALA scheme.

\subsection{The vGALA user association scheme}
%
Based on the collected network information, the RANC optimizes the user association and derives the optimal BS operation status. Leveraging the SoftRAN architecture, the RANC has a global view of the traffic loads and the availability of green energy in the network, to facilitate the user association optimization. However, owing to the large number of users and BSs, the user association algorithm if not well designed may be time consuming and incurs excessive delays. In order to efficiently optimize the user association, the vGALA scheme divides the user association algorithm into two parts: the user side algorithm and the BS side algorithm. The user side algorithm calculates the user's BS selection. The BS side algorithm updates the BS's operation status calculated based on the green traffic capacity and the traffic loads. Based on the updates, the user side algorithm re-calculates the BS selection. The user association algorithm iterates until it converges. After the convergence, the optimal BS operation status is obtained and the optimal user association is subsequently determined.

The information exchanges over the air interface between users and BSs may introduce additional communication overhead and incur extra power consumption. Leveraging cloud computing and virtualization, the vGALA scheme generates virtual users and virtual BSs (vBSs) in the RANC. The user side algorithm runs on virtual users while the BS side algorithm runs on vBSs. In this way, instead of exchanging information over the air interface, the virtual users and vBSs can iteratively update their information locally within the RANC. Here, the virtualization only virtualizes the computation resources for BSs and users rather than virtualzing all their functions.

\subsubsection{The user side algorithm}
We define the time interval between two consecutive BS selection updates as a time slot. At the beginning of the $k$th time slot, vBSs send their operation statuses to virtual users. Let
\begin{equation}
\psi(\boldsymbol{\rho})=\sum_{j \in \mathcal{B}}w_{j}(\rho_{j})L(\rho_{j}).
\end{equation}
The $j$th vBS's operation status in the $k$th time slot is defined as
\begin{align}
\label{eq:user_ass_message}
\phi_{j}(\rho_{j}(k))&=\frac{\partial \psi(\boldsymbol{\rho}(k))}{\partial {\rho}_{j}(k)}\nonumber\\
&=\frac{\vartheta_{j}e^{\kappa\theta_{j}(\rho_{j}(k)-\hat{\rho}_{j})}(\kappa\theta_{j}\rho_{j}(k)-\kappa\theta_{j}\rho_{j}(k)^{2}-1)}{(1-\rho_{j}(k))^2}.
\end{align}
Here, the $j$th vBS is mapped to the $j$th BS in the mobile network.

Let $\bar{\mathcal{B}}(x)=\{j|\alpha_{j}(x)\leq\alpha^{*}(x)\}$ be the set of BSs whose uplink pathloss is less than the user's pathloss threshold. Assign $r_{j}(x)=\zeta, \;\forall j\in\mathcal{B}\setminus\bar{\mathcal{B}}(x)$ where $\zeta$ is a very small positive number that approaches zero. This is equivalent to restricting the user from associating with the BSs outside $\bar{\mathcal{B}}(x)$.
Then, the BS selection rule for a user at location $x$ can be expressed as
\begin{equation}
\label{eq:bs_selection}
b^{k}(x)= \arg\max_{j \in \mathcal{B}}\frac{r_{j}(x)}{\phi_{j}(\rho_{j}(k))}
\end{equation}
Here, $b^{k}(x)$ is the index of the vBS selected by the virtual user at location $x$ in the $k$th time slot.
The pseudo code of the user side algorithm is shown in Alg. \ref{alg:usa_alg}. The computational complexity of the user side algorithm for an individual user is $O(|\mathcal{B}|)$.

\begin{algorithm}
\SetKwData{Left}{left}\SetKwData{This}{this}\SetKwData{Up}{up}
\SetKwFunction{Union}{Union}\SetKwFunction{FindCompress}{FindCompress}
\SetKwInOut{Input}{Input}\SetKwInOut{Output}{Output}
\Input{BSs' operation status: $\phi_{j}(\rho_{j}(k)), j\in\mathcal{B}$\;}
\Output{The BS selection: $b^{k}(x)$\;}
\nl Estimate the uplink pathloss: $\alpha_{j}(x)$\;
\nl Find $\bar{\mathcal{B}}(x)=\{j|\alpha_{j}(x)\leq\alpha^{*}(x)\}$\;
\nl Assign $r_{j}(x)=\zeta, \;\forall j\in\mathcal{B}\setminus\bar{\mathcal{B}}(x)$\;
\nl Find $b^{k}(x)= \arg\max_{j \in \mathcal{B}}\frac{r_{j}(x)}{\phi_{j}(\rho_{j}(k))}$\;
\caption{The User Side Algorithm\label{alg:usa_alg}}
\end{algorithm}

\subsubsection{The BS side algorithm}
Upon receiving vBSs' operation status updates, virtual users select vBSs according to the user side algorithm. Then, the coverage area of the $j$th vBS in the $k$th time slot is updated as
\begin{equation}
\label{eq:bs_cover_area}
\mathcal{\tilde{A}}_{j}(k)=\{x |j=b^{k}(x),\;\forall x \in \mathcal{A}\}
\end{equation}
Then, given $\boldsymbol{\rho}(k)=(\rho_{1}(k),\rho_{2}(k),\cdots,\rho_{|\mathcal{B}|}(k))$, $\boldsymbol{\theta}=({\theta_{1},\theta_{2},\cdots,\theta_{|\mathcal{B}|}})$, and $\boldsymbol{\hat{\rho}}=(\hat{\rho}_{1},\hat{\rho}_{2},\cdots,\hat{\rho}_{|\mathcal{B}|})$, the $j$th vBS's perceived traffic loads in the $k$th time slot is
\begin{equation}
\label{eq:update_traffic}
M_{j}(\boldsymbol{\rho}(k),\boldsymbol{\theta},\boldsymbol{\hat{\rho}})=\min{(\int_{x \in \mathcal{\tilde{A}}_{j}(k)}\varrho_{j}(x)dx,1-\epsilon)}.
\end{equation}
Since $\boldsymbol{\theta}$ and $\boldsymbol{\hat{\rho}}$ are assumed not to change within the duration of a user association process, $M_{j}(\boldsymbol{\rho}(k),\boldsymbol{\theta},\boldsymbol{\hat{\rho}})$ evolves based only on $\boldsymbol{\rho}(k)$. Thus, we use $M_{j}(\boldsymbol{\rho}(k))$ instead of $M_{j}(\boldsymbol{\rho}(k),\boldsymbol{\theta},\boldsymbol{\hat{\rho}})$ for simplicity in the following analysis.

After having derived the perceived traffic loads, the $j$th vBS updates its traffic loads as
\begin{equation}
\label{eq:traffic_updates}
\rho_{j}(k+1)=\delta(k)\rho_{j}(k)+(1-\delta(k))M_{j}(\boldsymbol{\rho}(k)).
\end{equation}
Here, $0\leq\delta(k)<1$ is a system parameter calculated by the RANC to enable
\begin{align}
\label{eq:back_line_search}
&\psi(\boldsymbol{\rho}(k+1))\nonumber\\
\;\;\;&\leq \psi(\boldsymbol{\rho}(k))+
\varsigma(1-\delta(k))\sum_{j\in\mathcal{B}} \phi_{j}(\rho_{j}(k))(M_{j}(\boldsymbol{\rho}(k))-\rho_{j}(k))
\end{align}
Here, $0<\varsigma<0.5$ is a constant.
In the $(k+1)$th time slot, the $j$th vBS's operation status is $\phi_{j}(\rho_{j}(k+1))$. The pseudo code of the BS sid algorithm is presented in Alg. \ref{alg:bsa_alg}. The computational complexity of the BS side algorithm is determined by the ``while'' loop whose running time depends on the convergence of $\psi(\boldsymbol{\rho}(k))$. When $\psi(\boldsymbol{\rho}(k))$ is closer to the optimal value, it may take longer time to find $\delta(k)$. In the following, we will analyze the convergence of the vGALA scheme, which reflects the computational complexity of the BS side algorithm.
\begin{algorithm}
\SetKwData{Left}{left}\SetKwData{This}{this}\SetKwData{Up}{up}
\SetKwFunction{Union}{Union}\SetKwFunction{FindCompress}{FindCompress}
\SetKwInOut{Input}{Input}\SetKwInOut{Output}{Output}
\Input{Users' vBS selection: $b^{k}(x), \forall x\in\mathcal{A}$\;}
\Output{vBSs'operation status, $\phi_{j}(\rho_{j}(k+1)),\forall j\in\mathcal{B}$\;}
\nl vBSs measure their perceived traffic loads, $M_{j}(\boldsymbol{\rho}(k))$\;
\nl Assign $\delta(k)=0$\;
\nl \While {Eq.~(\ref{eq:back_line_search}) is not true}{
\nl $\delta(k)=1-\xi(1-\delta(k))$, here, $0<\xi<1$ is a real number\;}
\nl vBSs update their traffic loads: $\rho_{j}(k+1)=\delta(k)\rho_{j}(k)+(1-\delta(k))M_{j}(\boldsymbol{\rho}(k))$\;
\nl Calculate $\phi_{j}(\rho_{j}(k+1))$ based on $\rho_{j}(k+1)$, $\forall j\in\mathcal{B}$\;
\caption{The BS Side Algorithm\label{alg:bsa_alg}}
\end{algorithm}

\subsection{The convergence of vGALA}
\label{subsec:convergence}
In order to prove the convergence of vGALA, we first prove that the vBSs' traffic load vector converges. The feasible set for the UA problem is
\begin{align}
\label{eq:feasible_set}
\mathcal{F}=\lbrace &\boldsymbol{\rho}|\rho_{j}=\int_{x \in \mathcal{A}}\varrho_{j}(x)dx,\nonumber \\
&0\leq\rho_{j}\leq 1-\epsilon,\; \sum_{j\in\mathcal{B}}\eta_{j}(x)=1,\nonumber \\
& \eta_{j}(x)=\{0,1\},\;\forall j\in\mathcal{B},\;\forall x \in\mathcal{A}\rbrace
\end{align}
Since $\eta_{j}(x)=\{0,1\}$, $\mathcal{F}$ is not a convex set. Thus, the traffic updates in Eq.~(\ref{eq:traffic_updates}) cannot guarantee the updated traffic loads are in the feasible set. In order to show the convergence of vGALA, we first relax the constraint to let $0\leq\eta_{j}(x)\leq 1$ and then prove the traffic load vector converges to the traffic load vector that is in the feasible set. Define
\begin{align}
\mathcal{\tilde{F}}=\lbrace &\boldsymbol{\rho}|\rho_{j}=\int_{x \in \mathcal{A}}\varrho_{j}(x)dx,\nonumber\\
&0\leq\rho_{j}\leq 1-\epsilon,\; \sum_{j\in\mathcal{B}}\eta_{j}(x)=1,\nonumber\\
& 0\leq \eta_{j}(x)\leq 1,\;\forall j\in\mathcal{B},\;\forall x \in\mathcal{A}\rbrace
\end{align}
as the relaxed feasible set.

\begin{lemma}
\label{thm:feasible_set}
The relaxed feasible set $\mathcal{\tilde{F}}$ is a convex set.
\end{lemma}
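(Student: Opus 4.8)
The plan is to show that $\mathcal{\tilde{F}}$ is convex by taking two arbitrary feasible points $\boldsymbol{\rho}^{(1)}, \boldsymbol{\rho}^{(2)} \in \mathcal{\tilde{F}}$ and a scalar $t \in [0,1]$, and verifying that the convex combination $\boldsymbol{\rho} = t\boldsymbol{\rho}^{(1)} + (1-t)\boldsymbol{\rho}^{(2)}$ also lies in $\mathcal{\tilde{F}}$. The key observation is that membership in $\mathcal{\tilde{F}}$ is entirely governed by the association functions: a vector $\boldsymbol{\rho}$ is in $\mathcal{\tilde{F}}$ if and only if there exists a profile $\{\eta_{j}(x)\}$ with $0 \le \eta_{j}(x) \le 1$ and $\sum_{j\in\mathcal{B}}\eta_{j}(x) = 1$ for all $x$, such that $\rho_{j} = \int_{x\in\mathcal{A}} \varrho_{j}(x)\,dx$ for every $j$, and moreover $0 \le \rho_{j} \le 1-\epsilon$. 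So the real content is to exhibit a valid $\{\eta_{j}(x)\}$ for the combined point.

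First I would fix, for each of the two points, its witnessing association profiles $\{\eta_{j}^{(1)}(x)\}$ and $\{\eta_{j}^{(2)}(x)\}$, and define the candidate profile for the combination as $\eta_{j}(x) = t\,\eta_{j}^{(1)}(x) + (1-t)\,\eta_{j}^{(2)}(x)$. Then I would check the three defining conditions in turn: (i) the box constraint $0 \le \eta_{j}(x) \le 1$ holds because it is a convex combination of values already in $[0,1]$; (ii) the simplex constraint $\sum_{j} \eta_{j}(x) = t\sum_{j}\eta_{j}^{(1)}(x) + (1-t)\sum_{j}\eta_{j}^{(2)}(x) = t + (1-t) = 1$; and (iii) the load consistency, where I use linearity of the integral in Eq.~(\ref{eq:bs_load}) together with the fact that $\varrho_{j}(x)$ from Eq.~(\ref{eq:bs_point_load}) is linear in $\eta_{j}(x)$ (since $\lambda(x)$, $\nu(x)$, and $r_{j}(x)$ do not depend on the association), so $\int_{x\in\mathcal{A}} \big(t\,\varrho_{j}^{(1)}(x) + (1-t)\,\varrho_{j}^{(2)}(x)\big)\,dx = t\rho_{j}^{(1)} + (1-t)\rho_{j}^{(2)} = \rho_{j}$. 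Finally, the range constraint $0 \le \rho_{j} \le 1-\epsilon$ is immediate since $\rho_{j}$ is a convex combination of $\rho_{j}^{(1)}, \rho_{j}^{(2)} \in [0, 1-\epsilon]$.

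I do not expect a serious obstacle here; the lemma is essentially a bookkeeping exercise exploiting that the only source of non-convexity in $\mathcal{F}$ was the integrality $\eta_{j}(x) \in \{0,1\}$, and relaxing it to $[0,1]$ makes every constraint defining $\mathcal{\tilde{F}}$ either linear or a convex (box/simplex) constraint preserved under convex combination. The one point that deserves a careful sentence is making explicit that $\varrho_{j}(x)$ is affine-linear in the decision variable $\eta_{j}(x)$ for fixed $x$ — so that pushing the convex combination through the integral is legitimate — and that the per-BS range bounds are consequences rather than independent restrictions. If anything, the mild subtlety is just being precise about what the "decision variable" underlying a point of $\mathcal{\tilde{F}}$ is, namely the function profile $\{\eta_{j}(\cdot)\}$, and that $\mathcal{\tilde{F}}$ is the image of the convex set of such profiles under the linear map $\{\eta_{j}(\cdot)\} \mapsto \boldsymbol{\rho}$, hence convex.
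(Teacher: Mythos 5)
Your proposal is correct and takes essentially the same route as the paper, which simply asserts that $\mathcal{\tilde{F}}$ contains any convex combination of traffic load vectors; you fill in the details the paper omits (convex-combining the witnessing profiles $\{\eta_{j}(x)\}$, checking the box and simplex constraints, and using linearity of $\varrho_{j}(x)$ in $\eta_{j}(x)$ under the integral). No gaps.
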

\begin{proof}
The lemma is proved by showing that the set $\mathcal{\tilde{F}}$ contains any convex combination of the traffic load vector~$\boldsymbol{\rho}$.
\end{proof}
\begin{lemma}
\label{thm:convexity}
$\psi(\boldsymbol{\rho})$ is a strong convex function of $\boldsymbol{\rho}$ when $\boldsymbol{\rho}$ is defined in $\mathcal{\tilde{F}}$.
\end{lemma}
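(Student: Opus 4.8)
The plan is to establish strong convexity of $\psi(\boldsymbol{\rho})=\sum_{j\in\mathcal{B}}w_j(\rho_j)L(\rho_j)$ by analyzing it term by term. Since the sum of strongly convex functions (in disjoint coordinates) is strongly convex, and since each summand $f_j(\rho_j) := w_j(\rho_j)L(\rho_j)=\vartheta_j\,e^{\kappa\theta_j(\rho_j-\hat\rho_j)}\dfrac{\rho_j}{1-\rho_j}$ depends only on the single variable $\rho_j$, it suffices to show that each $f_j$ is a strongly convex function of the scalar $\rho_j$ on the interval $[0,1-\epsilon]$. Because $\psi$ is separable, its Hessian is diagonal, so strong convexity of $\psi$ on $\tilde{\mathcal{F}}$ reduces to a uniform lower bound $f_j''(\rho_j)\ge m>0$ for each $j$ and all admissible $\rho_j$; the restriction to the bounded interval $[0,1-\epsilon]$ (rather than $[0,1)$) is what makes such a uniform bound possible, since everything blows up as $\rho_j\to 1$.

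The key steps, in order, would be: (i) note separability of $\psi$ and reduce to the one-dimensional claim for $f_j$; (ii) compute $f_j'(\rho_j)$ and $f_j''(\rho_j)$ explicitly — one already has $L'(\rho_j)=\vartheta_j/(1-\rho_j)^2$ and $L''(\rho_j)=2\vartheta_j/(1-\rho_j)^3$ from Eq.~(\ref{eq:deritive_L}), and $w_j'(\rho_j)=\kappa\theta_j w_j(\rho_j)$, $w_j''(\rho_j)=(\kappa\theta_j)^2 w_j(\rho_j)$, so by the product rule $f_j'' = w_j''L + 2w_j'L' + w_jL'' = w_j(\rho_j)\big[(\kappa\theta_j)^2 L(\rho_j) + 2\kappa\theta_j L'(\rho_j) + L''(\rho_j)\big]$; (iii) observe that $w_j(\rho_j)>0$ always, and that on $0\le\rho_j\le 1-\epsilon$ the bracketed quantity is bounded below by a positive constant. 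For this last point, since $L\ge 0$, $L'>0$, $L''>0$, and $\kappa\theta_j\ge 0$, every term in the bracket is nonnegative, and the term $L''(\rho_j)=2\vartheta_j/(1-\rho_j)^3\ge 2\vartheta_j$ already gives a strictly positive lower bound independent of $\rho_j$. Combining with $w_j(\rho_j)\ge w_j(0)=e^{-\kappa\theta_j\hat\rho_j}>0$ (if $\kappa\theta_j\ge 0$; otherwise take the minimum over the compact interval), we get $f_j''(\rho_j)\ge 2\vartheta_j e^{-\kappa\theta_j\hat\rho_j}=:m_j>0$, and then $m=\min_j m_j$ serves as the strong convexity modulus for $\psi$ over $\tilde{\mathcal{F}}$.

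The main obstacle — really the only subtlety — is handling the exponential weight: one must confirm that $w_j$ stays bounded away from zero on the feasible interval (immediate, since it is a positive continuous function on a compact set, with explicit bound $e^{-\kappa\theta_j\hat\rho_j}$ when $\kappa\theta_j>0$), and, if one wants a clean closed-form modulus, keep track of the fact that $\hat\rho_j$ and $\theta_j$ are fixed constants during the association process (as assumed in the model). A secondary point is to be explicit that $\psi$ is defined on the convex set $\tilde{\mathcal{F}}$ (Lemma~\ref{thm:feasible_set}) via the relaxed indicators $\eta_j(x)\in[0,1]$, so that "strong convexity" is a meaningful statement on a convex domain; the map $\boldsymbol{\eta}\mapsto\boldsymbol{\rho}$ is affine, but since we have already reduced to treating $\boldsymbol{\rho}$ as the free variable ranging over $\tilde{\mathcal{F}}$, no further care beyond the Hessian bound is needed. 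I would therefore present the proof as: separability $\Rightarrow$ diagonal Hessian $\Rightarrow$ per-coordinate second-derivative lower bound via the product-rule expansion above $\Rightarrow$ uniform modulus $m>0$.
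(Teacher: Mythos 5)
Your proposal is correct and follows essentially the same route as the paper: both prove the lemma by noting the separable structure of $\psi$ and lower-bounding its diagonal Hessian, i.e., showing $\bigtriangledown^{2}\psi(\boldsymbol{\rho})\succeq q\boldsymbol{I}$ on $\mathcal{\tilde{F}}$ (the paper asserts this with $q=4e^{-1}$, while you derive an explicit per-coordinate modulus $2\vartheta_{j}e^{-\kappa\theta_{j}\hat{\rho}_{j}}$ via the product-rule expansion $w_{j}^{\prime\prime}L+2w_{j}^{\prime}L^{\prime}+w_{j}L^{\prime\prime}$). No gap.
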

\begin{proof}
\label{prf:cvx}
The lemma is proved by showing $\bigtriangledown^{2}\psi(\boldsymbol{\rho})\succeq q\boldsymbol{I}$ where $q=4e^{-1}$ and $\boldsymbol{I}$ is an identity matrix.
\end{proof}
Let $\boldsymbol{M}(\boldsymbol{\rho})=\{M_{1}(\boldsymbol{\rho}),M_{2}(\boldsymbol{\rho}),\cdots,M_{|\mathcal{B}|}(\boldsymbol{\rho})\}$.
\begin{lemma}
\label{thm:descent_direction}
When $\boldsymbol{M}(\boldsymbol{\rho}(k)) \neq \boldsymbol{\rho}(k)$, $\boldsymbol{M}(\boldsymbol{\rho}(k))$ provides a descent direction of $\psi(\boldsymbol{\rho})$ at $\boldsymbol{\rho}(k)$.
\end{lemma}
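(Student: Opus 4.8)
The plan is to read the sum $\sum_{j\in\mathcal{B}}\phi_j(\rho_j(k))\bigl(M_j(\boldsymbol{\rho}(k))-\rho_j(k)\bigr)$ as the directional derivative of $\psi$ at $\boldsymbol{\rho}(k)$ along $\boldsymbol{M}(\boldsymbol{\rho}(k))-\boldsymbol{\rho}(k)$: by Eq.~(\ref{eq:user_ass_message}), $\phi_j(\rho_j(k))=\partial\psi(\boldsymbol{\rho}(k))/\partial\rho_j$, so this quantity equals $\langle\nabla\psi(\boldsymbol{\rho}(k)),\boldsymbol{M}(\boldsymbol{\rho}(k))-\boldsymbol{\rho}(k)\rangle$, and ``descent direction'' means it is strictly negative. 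The argument is of conditional-gradient (Frank--Wolfe) type: first show that $\boldsymbol{M}(\boldsymbol{\rho}(k))$ minimizes the linearized objective $\boldsymbol{\rho}'\mapsto\langle\nabla\psi(\boldsymbol{\rho}(k)),\boldsymbol{\rho}'\rangle$ over the relaxed feasible set $\tilde{\mathcal{F}}$, which gives $\langle\nabla\psi(\boldsymbol{\rho}(k)),\boldsymbol{M}(\boldsymbol{\rho}(k))-\boldsymbol{\rho}(k)\rangle\le 0$, and then upgrade this to a strict inequality using the strong convexity of $\psi$ (Lemma~\ref{thm:convexity}) together with the hypothesis $\boldsymbol{M}(\boldsymbol{\rho}(k))\neq\boldsymbol{\rho}(k)$.

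For the first step I would pass to integrals over $\mathcal{A}$. Using $\varrho_j(x)=\lambda(x)\nu(x)\eta_j(x)/r_j(x)$ together with Eqs.~(\ref{eq:bs_cover_area})--(\ref{eq:update_traffic}), the term $\sum_j\phi_j(\rho_j(k))M_j(\boldsymbol{\rho}(k))$ becomes $\int_{\mathcal{A}}\lambda(x)\nu(x)\,\phi_{b^k(x)}(\rho_{b^k(x)}(k))/r_{b^k(x)}(x)\,dx$ (momentarily dropping the $\min(\cdot,1-\epsilon)$ clipping), while for any fractional association $\{\eta_j^{(k)}(x)\}$ realizing $\boldsymbol{\rho}(k)\in\tilde{\mathcal{F}}$ one has $\sum_j\phi_j(\rho_j(k))\rho_j(k)=\int_{\mathcal{A}}\lambda(x)\nu(x)\sum_j\eta_j^{(k)}(x)\,\phi_j(\rho_j(k))/r_j(x)\,dx$. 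Since the $\phi_j(\rho_j(k))$ carry a common sign (readily checked from Eq.~(\ref{eq:user_ass_message})), the rule Eq.~(\ref{eq:bs_selection}), $b^k(x)=\arg\max_j r_j(x)/\phi_j(\rho_j(k))$, is the same as $\arg\min_j\phi_j(\rho_j(k))/r_j(x)$, so pointwise $\phi_{b^k(x)}(\rho_{b^k(x)}(k))/r_{b^k(x)}(x)\le\sum_j\eta_j^{(k)}(x)\,\phi_j(\rho_j(k))/r_j(x)$ for every convex weighting; integrating against $\lambda(x)\nu(x)\ge0$ yields $\sum_j\phi_j(\rho_j(k))M_j(\boldsymbol{\rho}(k))\le\sum_j\phi_j(\rho_j(k))\rho_j(k)$.

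For strictness, suppose equality held. Then $\boldsymbol{\rho}(k)$ too minimizes the linear model over $\tilde{\mathcal{F}}$, i.e.\ $\langle\nabla\psi(\boldsymbol{\rho}(k)),\boldsymbol{\rho}'-\boldsymbol{\rho}(k)\rangle\ge0$ for all $\boldsymbol{\rho}'\in\tilde{\mathcal{F}}$; by Lemmas~\ref{thm:feasible_set} and~\ref{thm:convexity} this first-order optimality condition forces $\boldsymbol{\rho}(k)$ to be the unique minimizer of $\psi$ over $\tilde{\mathcal{F}}$, at which point the per-location choice in Eq.~(\ref{eq:bs_selection}) reproduces the optimal association and hence $\boldsymbol{M}(\boldsymbol{\rho}(k))=\boldsymbol{\rho}(k)$, contradicting the hypothesis. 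Therefore $\langle\nabla\psi(\boldsymbol{\rho}(k)),\boldsymbol{M}(\boldsymbol{\rho}(k))-\boldsymbol{\rho}(k)\rangle<0$ and $\boldsymbol{M}(\boldsymbol{\rho}(k))-\boldsymbol{\rho}(k)$ is a genuine descent direction of $\psi$ at $\boldsymbol{\rho}(k)$; this is also exactly what guarantees the backtracking search in Eq.~(\ref{eq:back_line_search}) terminates with an admissible $\delta(k)$.

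I expect the main obstacle to be the $\min(\cdot,1-\epsilon)$ clipping in Eq.~(\ref{eq:update_traffic}): it means $\boldsymbol{M}(\boldsymbol{\rho}(k))$ need not be exactly the vertex of $\tilde{\mathcal{F}}$ singled out by $b^k$, so I must verify, using the common sign of the $\phi_j(\rho_j(k))$, that clamping a coordinate can only move the linear model in the favourable direction (or argue that clamping is never triggered while the maintained bound $\rho_j\le 1-\epsilon$ holds). A related delicate point is handling ties in Eq.~(\ref{eq:bs_selection}), which must be shown to occur only on a null subset of $\mathcal{A}$ so that the ``at the optimum $\boldsymbol{M}(\boldsymbol{\rho}(k))=\boldsymbol{\rho}(k)$'' step is not vacuous; one should also note that $\boldsymbol{\rho}(k)\in\tilde{\mathcal{F}}$ for every $k$, which follows by induction from convexity of $\tilde{\mathcal{F}}$ (Lemma~\ref{thm:feasible_set}) and the update Eq.~(\ref{eq:traffic_updates}), again modulo the clipping.
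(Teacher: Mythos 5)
Your proposal is correct in substance and follows essentially the same route as the paper: you expand $\langle\nabla\psi(\boldsymbol{\rho}(k)),\boldsymbol{M}(\boldsymbol{\rho}(k))-\boldsymbol{\rho}(k)\rangle$ as an integral over $\mathcal{A}$ and use the fact that the selection rule in Eq.~(\ref{eq:bs_selection}) picks, at each location, the index minimizing $\phi_j(\rho_j(k))/r_j(x)$, so the integrand comparing $\hat\eta_j(x)$ against any association realizing $\boldsymbol{\rho}(k)$ is pointwise nonpositive --- exactly the paper's Eqs.~(\ref{eq:prf_descent_direction})--(\ref{eq:sum_leq_0}). Where you diverge is the strictness step: the paper simply asserts that $\boldsymbol{M}(\boldsymbol{\rho}(k))\neq\boldsymbol{\rho}(k)$ forces a strictly negative integrand somewhere (Eq.~(\ref{eq:sum_less_0})), whereas you argue by contradiction through Frank--Wolfe first-order optimality combined with the strong convexity of Lemma~\ref{thm:convexity}, concluding that equality would make $\boldsymbol{\rho}(k)$ the unique minimizer and hence force $\boldsymbol{M}(\boldsymbol{\rho}(k))=\boldsymbol{\rho}(k)$. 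Your version makes the role of convexity explicit, but it leans on the same unproven point as the paper's one-line assertion: when ties occur in the argmax, the linearized value can be attained by associations different from the one realizing $\boldsymbol{\rho}(k)$, so the step ``optimality implies $\boldsymbol{M}(\boldsymbol{\rho}(k))=\boldsymbol{\rho}(k)$'' (and, dually, the paper's claim that differing associations yield a strictly negative sum) requires the tie set to be negligible --- precisely the caveat you flag. Likewise, the $\min(\cdot,1-\epsilon)$ clipping in Eq.~(\ref{eq:update_traffic}) that concerns you is silently dropped in the paper's own expansion of $M_j(\boldsymbol{\rho}(k))$, so on both delicate points your attempt is no less rigorous than the published proof.
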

\begin{proof}
\label{prf:descent_direction}
Since $\psi(\boldsymbol{\rho})$ is a convex function, proving the lemma is equivalent to prove
\begin{equation}
\langle\bigtriangledown{\psi(\boldsymbol{\rho})}|_{\boldsymbol{\rho}=\boldsymbol{\rho}(k)},\boldsymbol{M}(\boldsymbol{\rho}(k))-\boldsymbol{\rho}(k)\rangle<0.
\end{equation}
Let $\hat{\eta}_{j}(x)$ and $\eta_{j}(x)$ be the user association indication of the $j$th BS that result in the traffic load $M_{j}(\boldsymbol{\rho}(k))$ and $\rho_{j}(k)$, respectively.
\begin{eqnarray}
\label{eq:prf_descent_direction}
&&\langle\bigtriangledown{\psi(\boldsymbol{\rho})}|_{\boldsymbol{\rho}=\boldsymbol{\rho}(k)},\boldsymbol{M}(\boldsymbol{\rho}(k))-\boldsymbol{\rho}(k)\rangle\\
&&=\sum_{j\in\mathcal{B}}(M_{j}(\boldsymbol{\rho}(k))-\rho_{j}(k))\phi_{j}(\rho_{j}(k)) \nonumber\\
&&=\sum_{j\in\mathcal{B}}\frac{\int_{x \in \mathcal{A}}\lambda(x)\nu(x)(\hat{\eta}_{j}(x)-\eta_{j}(x))dx}{r_{j}(x)\phi^{-1}_{j}(\rho_{j}(k))} \nonumber\\
&&=\int_{x \in \mathcal{A}}\lambda(x)\nu(x)\sum_{j\in\mathcal{B}}\frac{\hat{\eta}_{j}(x)-\eta_{j}(x)}{r_{j}(x)\phi^{-1}_{j}(\rho_{j}(k))}dx. \nonumber
\end{eqnarray}
Since
\begin{equation}
\label{eq:eta_m}
\setlength{\nulldelimiterspace}{0pt}
\hat{\eta}_{j}(x)=\left\{\begin{IEEEeqnarraybox}[\relax][c]{ls}
1,
\; &for $j=b^{k}(x)$  \\
0, \; &for $otherwise$,%
\end{IEEEeqnarraybox}\right.
\end{equation}

\begin{equation}
\label{eq:sum_leq_0}
\sum_{j\in\mathcal{B}}\frac{\hat{\eta}_{j}(x)-\eta_{j}(x)}{r_{j}(x)\phi^{-1}_{j}(\rho_{j}(k))}\leq 0.
\end{equation}
Because $\boldsymbol{M}(\boldsymbol{\rho}(k)) \neq \boldsymbol{\rho}(k)$, there exists $j\in \mathcal{B}$ such that $\hat{\eta}_{j}(x)\neq\eta_{j}(x)$, $ x \in \mathcal{A}$. Hence,
\begin{equation}
\label{eq:sum_less_0}
\sum_{j\in\mathcal{B}}\frac{\hat{\eta}_{j}(x)-\eta_{j}(x)}{r_{j}(x)\phi^{-1}_{j}(\rho_{j}(k))}<0,
\end{equation}
and $\langle\bigtriangledown{\psi(\boldsymbol{\rho})}|_{\boldsymbol{\rho}=\boldsymbol{\rho}(k)},\boldsymbol{M}(\boldsymbol{\rho}(k))-\boldsymbol{\rho}(k)\rangle<0$.
\end{proof}
\begin{theorem}
\label{thm:alg_converge_thm}
The traffic load vector $\boldsymbol{\rho}$ converges to the traffic load vector $\boldsymbol{\rho}^{*}\in \mathcal{F}$.
\end{theorem}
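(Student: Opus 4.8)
The plan is to combine the three preceding lemmas into a standard descent-method convergence argument, and then recover feasibility of the limit point. Lemma~\ref{thm:convexity} tells us that $\psi$ is $q$-strongly convex on the convex set $\mathcal{\tilde{F}}$ with $q=4e^{-1}$, and Lemma~\ref{thm:descent_direction} tells us that whenever $\boldsymbol{M}(\boldsymbol{\rho}(k))\neq\boldsymbol{\rho}(k)$ the vector $\boldsymbol{M}(\boldsymbol{\rho}(k))-\boldsymbol{\rho}(k)$ is a strict descent direction for $\psi$ at $\boldsymbol{\rho}(k)$. The update rule~(\ref{eq:traffic_updates}) is exactly a move along this direction with step length $1-\delta(k)$, and the step length is chosen by the backtracking rule~(\ref{eq:back_line_search}), which is the Armijo/sufficient-decrease condition with parameter $\varsigma$. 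So the first step is to argue, as in the textbook analysis of gradient/projected-gradient descent with Armijo line search on a strongly convex objective, that the backtracking loop in Alg.~\ref{alg:bsa_alg} terminates in finitely many iterations with a strictly positive step $1-\delta(k)$, bounded below uniformly in $k$ (using the Lipschitz continuity of $\bigtriangledown\psi$ on the compact set $\mathcal{\tilde{F}}$, which follows from Lemma~\ref{thm:convexity} together with the explicit form of $\phi_j$). This gives a strict, quantifiable decrease $\psi(\boldsymbol{\rho}(k+1))\leq\psi(\boldsymbol{\rho}(k))$ at every iteration.

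Next I would observe that the sequence $\{\psi(\boldsymbol{\rho}(k))\}$ is monotonically non-increasing and bounded below (since $\psi\geq0$ on $\mathcal{\tilde{F}}$, or more simply since a continuous function on the compact set $\mathcal{\tilde{F}}$ attains a minimum), hence convergent. Summing the sufficient-decrease inequalities~(\ref{eq:back_line_search}) and telescoping shows $\sum_k (1-\delta(k))\,\langle\bigtriangledown\psi(\boldsymbol{\rho}(k)),\boldsymbol{\rho}(k)-\boldsymbol{M}(\boldsymbol{\rho}(k))\rangle<\infty$, which forces $\langle\bigtriangledown\psi(\boldsymbol{\rho}(k)),\boldsymbol{\rho}(k)-\boldsymbol{M}(\boldsymbol{\rho}(k))\rangle\to0$, and combined with the uniform lower bound on the step this forces $\boldsymbol{M}(\boldsymbol{\rho}(k))-\boldsymbol{\rho}(k)\to\boldsymbol{0}$, hence $\boldsymbol{\rho}(k+1)-\boldsymbol{\rho}(k)\to\boldsymbol{0}$. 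Because $\mathcal{\tilde{F}}$ is compact, $\{\boldsymbol{\rho}(k)\}$ has a convergent subsequence; by strong convexity the minimizer of $\psi$ over $\mathcal{\tilde{F}}$ is unique, and any limit point is a fixed point of the map $\boldsymbol{\rho}\mapsto\boldsymbol{M}(\boldsymbol{\rho})$, which by Lemma~\ref{thm:descent_direction} (its contrapositive) is precisely a point where no descent direction of the required form exists — i.e.\ a minimizer. Uniqueness of the minimizer then upgrades subsequential convergence to convergence of the whole sequence to a limit $\boldsymbol{\rho}^{*}$.

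Finally I would address feasibility, i.e.\ that $\boldsymbol{\rho}^{*}\in\mathcal{F}$ and not merely in the relaxed set $\mathcal{\tilde{F}}$. The key point is that the perceived-load map $\boldsymbol{M}$ is always generated by the BS-selection rule~(\ref{eq:bs_selection}), which assigns each location $x$ to a \emph{single} vBS, so every iterate $\boldsymbol{M}(\boldsymbol{\rho}(k))$ already corresponds to integer association indicators $\hat\eta_j(x)\in\{0,1\}$; hence $\boldsymbol{M}(\boldsymbol{\rho}(k))\in\mathcal{F}$ for all $k$. Since $\boldsymbol{M}(\boldsymbol{\rho}(k))-\boldsymbol{\rho}(k)\to\boldsymbol{0}$ and $\boldsymbol{\rho}(k)\to\boldsymbol{\rho}^{*}$, we get $\boldsymbol{M}(\boldsymbol{\rho}(k))\to\boldsymbol{\rho}^{*}$ as well, and closedness of $\mathcal{F}$ (it is the image of the compact association polytope's vertices under a continuous load map, hence closed) yields $\boldsymbol{\rho}^{*}\in\mathcal{F}$.

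I expect the main obstacle to be the feasibility/closedness step rather than the descent argument: one must be careful that the relaxation was used only as an analytical device for convexity, and that the actual iterates $\boldsymbol{M}(\boldsymbol{\rho}(k))$ never leave $\mathcal{F}$, so that the limit — obtained as a limit of points in $\mathcal{F}$ — stays in $\mathcal{F}$ even though the convergence machinery lives in $\mathcal{\tilde{F}}$. A secondary technical point is justifying the uniform lower bound on the Armijo step $1-\delta(k)$; this needs the Lipschitz constant of $\bigtriangledown\psi$ on $\mathcal{\tilde{F}}$, which is finite because $\phi_j$ and its derivative are continuous on the compact interval $[0,1-\epsilon]$, but it should be stated explicitly.
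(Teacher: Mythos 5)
Your proposal follows essentially the same route as the paper's own proof: Lemma~\ref{thm:descent_direction} together with the backtracking condition~(\ref{eq:back_line_search}) gives monotone decrease of $\psi$, which is bounded below and hence convergent, and feasibility of the limit is obtained from the fact that $\boldsymbol{M}$ is always generated by the integer-valued BS-selection rule~(\ref{eq:bs_selection}); you simply fill in the standard descent-method details (uniform lower bound on the Armijo step, compactness of $\mathcal{\tilde{F}}$, uniqueness of the minimizer from strong convexity, closedness of $\mathcal{F}$) that the paper leaves implicit. The only step to be careful about---deducing $\boldsymbol{M}(\boldsymbol{\rho}(k))-\boldsymbol{\rho}(k)\to\boldsymbol{0}$ from the vanishing of $\langle\bigtriangledown\psi(\boldsymbol{\rho}(k)),\boldsymbol{\rho}(k)-\boldsymbol{M}(\boldsymbol{\rho}(k))\rangle$, which is not automatic since this Frank--Wolfe-type gap can vanish without the direction vanishing---is precisely the same leap the paper makes when it asserts from the update equation that $\boldsymbol{M}(\boldsymbol{\rho})$ and $\boldsymbol{\rho}$ ``will converge to $\boldsymbol{\rho}^{*}$,'' so your argument is at least as rigorous as the original.
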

\begin{proof}
\label{prf:alg_converge_thm}
Since $\sum_{j\in\mathcal{B}}(M_{j}(\boldsymbol{\rho}(k))-\rho_{j}(k))\phi_{j}(\rho_{j}(k))<0 $ when $\boldsymbol{M}(\boldsymbol{\rho}(k)) \neq \boldsymbol{\rho}(k)$, Alg. \ref{alg:bsa_alg} ensures $\psi(\boldsymbol{\rho}(k+1))\leq\psi(\boldsymbol{\rho}(k))$ in each time slot. Since $\psi(\boldsymbol{\rho})\geq 0$, $\psi(\boldsymbol{\rho})$ will converge. Let $\psi(\boldsymbol{\rho})$ converge to $\psi(\boldsymbol{\rho}^{*})$. Since
\begin{align}
\label{eq:alg_opt_1}
\boldsymbol{\rho}(k+1)&=\delta(k)\boldsymbol{\rho}(k)+(1-\delta(k))(\boldsymbol{M}\boldsymbol{\rho}(k)) \nonumber\\
&=\boldsymbol{\rho}(k) + (1-\delta(k))(\boldsymbol{M}(\boldsymbol{\rho}(k))-\boldsymbol{\rho}(k)),
\end{align}
$\boldsymbol{M}(\boldsymbol{\rho})$ and $\boldsymbol{\rho}$ will converge to $\boldsymbol{\rho}^{*}$. Because $\boldsymbol{M}(\boldsymbol{\rho}{*})$ is derived based on the user side algorithm where $\eta^{m}_{j}(x)=\{0,1\}, \forall j\in\mathcal{B},\; x\in\mathcal{A}$, $\boldsymbol{\rho}^{*}$ is in the feasible set~$\mathcal{F}$.
\end{proof}

\begin{corollary}
\label{thm:alg_converge_col}
The vBSs' operation status $\phi_{j}(\rho_{j}), \;\forall j \in \mathcal{B}$, converges to $\phi_{j}(\rho^{*}_{j})$.
\end{corollary}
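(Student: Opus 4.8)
The plan is to derive this corollary directly from Theorem~\ref{thm:alg_converge_thm} together with the continuity of each map $\rho_j \mapsto \phi_j(\rho_j)$ on the feasible interval $[0,1-\epsilon]$. First I would recall that, by Theorem~\ref{thm:alg_converge_thm}, the traffic load vector $\boldsymbol{\rho}(k)$ converges to some $\boldsymbol{\rho}^{*}\in\mathcal{F}$; in particular every component satisfies $\rho_j(k)\to\rho_j^{*}$ with $0\le\rho_j^{*}\le 1-\epsilon$. So it suffices to show that passing to the limit inside $\phi_j$ is legitimate, i.e.\ that $\phi_j$ is continuous at $\rho_j^{*}$.

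Next I would inspect the explicit expression for $\phi_j$ in Eq.~(\ref{eq:user_ass_message}). Its numerator $\vartheta_{j}e^{\kappa\theta_{j}(\rho_{j}-\hat{\rho}_{j})}(\kappa\theta_{j}\rho_{j}-\kappa\theta_{j}\rho_{j}^{2}-1)$ is the product of an exponential and a polynomial in $\rho_j$, hence continuous for all $\rho_j$; its denominator $(1-\rho_j)^{2}$ is continuous and, on $[0,1-\epsilon]$, bounded below by $\epsilon^{2}>0$, so it never vanishes on the relevant range. Consequently $\phi_j$ is continuous (indeed smooth) on $[0,1-\epsilon]$ for each $j\in\mathcal{B}$.

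The conclusion then follows from the elementary fact that continuous functions preserve limits: since $\rho_j(k)\to\rho_j^{*}$ and $\phi_j$ is continuous at $\rho_j^{*}$, we get $\phi_j(\rho_j(k))\to\phi_j(\rho_j^{*})$ for every $j\in\mathcal{B}$, which is exactly the claimed convergence of the vBSs' operation statuses.

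There is no substantive obstacle here; the corollary is essentially a continuity corollary of Theorem~\ref{thm:alg_converge_thm}. The only point needing a word of justification is that the iterates $\rho_j(k)$ and their limit remain bounded away from the singularity of $\phi_j$ at $\rho_j=1$, which is guaranteed by the cap $\min(\cdot,\,1-\epsilon)$ built into the traffic update in Eqs.~(\ref{eq:update_traffic})--(\ref{eq:traffic_updates}) and by the constraint $\rho_j\le 1-\epsilon$ defining $\mathcal{F}$ in Eq.~(\ref{eq:feasible_set}); hence both the sequence and $\boldsymbol{\rho}^{*}$ lie in the region where $\phi_j$ is continuous.
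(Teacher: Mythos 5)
Your proposal is correct and follows essentially the same route as the paper: the paper also deduces the corollary from Theorem~\ref{thm:alg_converge_thm} by noting that, with $\vartheta_j$, $\theta_j$, and $\hat{\rho}_j$ fixed during a user association process, $\phi_j$ is a function of $\rho_j$ alone, so $\rho_j\to\rho_j^{*}$ implies $\phi_j(\rho_j)\to\phi_j(\rho_j^{*})$. You merely make explicit the continuity of $\phi_j$ on $[0,1-\epsilon]$ and the fact that the iterates stay away from the singularity at $\rho_j=1$, which the paper leaves implicit.
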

\begin{proof}
Within the duration of a user association process, $\vartheta_{j}$, $\theta_{j}$, and $\hat{\rho}_{j}$ are constant. Thus, $\phi_{j}(\rho_{j})$ is only determined by $\rho_{j}$. Since $\rho_{j}$ converges to $\rho^{*}_{j}$, $\phi_{j}(\rho_{j})$ converges to $\phi_{j}(\rho^{*}_{j})$.
\end{proof}
Since $\psi(\boldsymbol{\rho})$ is a strong convex function, there exists $q>0$ and $Q>0$ such that $q\boldsymbol{I} \preceq\bigtriangledown^{2}\psi(\boldsymbol{\rho})\preceq Q\boldsymbol{I},\;\boldsymbol{\rho}\in\tilde{\mathcal{F}}$~\cite{Boyd:2004:CVX}. Denote $\psi(\boldsymbol{\rho}^{*})$ as the optimal solution. $\psi(\boldsymbol{\rho}(k+1))$ is said to be the $\epsilon$ suboptimal solution if $\psi(\boldsymbol{\rho}(k+1))-\psi(\boldsymbol{\rho}^{*})\leq \epsilon$ where $\epsilon>0$ is a small real number.
\begin{lemma}
\label{thm:converge_speed}
The number of iterations required to ensure $\psi(\boldsymbol{\rho}(k+1))-\psi(\boldsymbol{\rho}^{*})\leq \epsilon$ is at most equal to
\begin{equation}
\label{eq:num_iteration}
\frac{\log((\psi(\boldsymbol{\rho}(1))-\psi(\boldsymbol{\rho}^{*}))/\epsilon)}{\log{1/z}}
\end{equation}
where $z=1-\min\{2q\varsigma,2q\varsigma\xi/Q\}<1$ and $\boldsymbol{\rho}(1)$ is the initial traffic load vector.
\end{lemma}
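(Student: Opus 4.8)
The plan is to view the update~(\ref{eq:traffic_updates}), which~(\ref{eq:alg_opt_1}) puts in the form $\boldsymbol{\rho}(k+1)=\boldsymbol{\rho}(k)+t(k)\,\boldsymbol{d}(k)$ with step $t(k)=1-\delta(k)\in(0,1]$ and direction $\boldsymbol{d}(k)=\boldsymbol{M}(\boldsymbol{\rho}(k))-\boldsymbol{\rho}(k)$, as a descent step controlled by the Armijo backtracking line search~(\ref{eq:back_line_search}) on the strongly convex $\psi$, and then to run the classical geometric-convergence analysis for such iterations~\cite{Boyd:2004:CVX}. Two facts are available at the outset. First, by Lemma~\ref{thm:descent_direction}, $-g_k:=\langle\bigtriangledown\psi(\boldsymbol{\rho}(k)),\boldsymbol{d}(k)\rangle=\sum_{j\in\mathcal{B}}\phi_j(\rho_j(k))(M_j(\boldsymbol{\rho}(k))-\rho_j(k))<0$, so $\boldsymbol{d}(k)$ is a strict descent direction with $g_k>0$. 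Second, since $\tilde{\mathcal{F}}$ is convex (Lemma~\ref{thm:feasible_set}) and every iterate is a convex combination of points of $\tilde{\mathcal{F}}$, the whole trajectory stays in $\tilde{\mathcal{F}}$, on which $q\boldsymbol{I}\preceq\bigtriangledown^2\psi\preceq Q\boldsymbol{I}$; the upper bound will control the line search and the lower (strong-convexity) bound will convert the per-step decrease into a contraction.

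The first step is to show the line search terminates with a step bounded away from zero. A second-order expansion together with $\bigtriangledown^2\psi\preceq Q\boldsymbol{I}$ gives $\psi(\boldsymbol{\rho}(k)+t\boldsymbol{d}(k))\le\psi(\boldsymbol{\rho}(k))-t\,g_k+\frac{Q}{2}t^2\|\boldsymbol{d}(k)\|^2$, so the test~(\ref{eq:back_line_search}) holds for every $t$ with $\frac{Q}{2}t\|\boldsymbol{d}(k)\|^2\le(1-\varsigma)g_k$. Since the loop of Alg.~\ref{alg:bsa_alg} starts from $t=1$ and multiplies $t$ by $\xi$ each time~(\ref{eq:back_line_search}) fails, it stops at some $t(k)\ge\min\{1,\,2\xi(1-\varsigma)g_k/(Q\|\boldsymbol{d}(k)\|^2)\}$; using $\varsigma<1/2$ this reduces to the step-size lower bound $t(k)\ge\min\{1,\xi/Q\}$ familiar from gradient descent with backtracking, once the ratio $g_k/\|\boldsymbol{d}(k)\|^2$ is controlled (see the last paragraph).

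The second step combines this with the line-search decrease. From~(\ref{eq:back_line_search}), $\psi(\boldsymbol{\rho}(k+1))\le\psi(\boldsymbol{\rho}(k))-\varsigma\,t(k)\,g_k$. Feeding in $t(k)\ge\min\{1,\xi/Q\}$ and the strong-convexity fact that this guaranteed one-step decrease is at least $\min\{2q\varsigma,\,2q\varsigma\xi/Q\}$ times the current gap $\psi(\boldsymbol{\rho}(k))-\psi(\boldsymbol{\rho}^{*})$---obtained exactly as for gradient descent with backtracking via $\|\bigtriangledown\psi(\boldsymbol{\rho}(k))\|^2\ge 2q(\psi(\boldsymbol{\rho}(k))-\psi(\boldsymbol{\rho}^{*}))$---yields $\psi(\boldsymbol{\rho}(k+1))-\psi(\boldsymbol{\rho}^{*})\le z(\psi(\boldsymbol{\rho}(k))-\psi(\boldsymbol{\rho}^{*}))$ with $z=1-\min\{2q\varsigma,\,2q\varsigma\xi/Q\}$; one checks $0<z<1$ since $\varsigma<1/2$, $\xi<1$, and $Q\ge q=4e^{-1}$. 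Iterating from $k=1$ gives $\psi(\boldsymbol{\rho}(k+1))-\psi(\boldsymbol{\rho}^{*})\le z^{k}(\psi(\boldsymbol{\rho}(1))-\psi(\boldsymbol{\rho}^{*}))$; requiring the right-hand side not to exceed $\epsilon$, taking logarithms, and dividing by $\log z<0$ (which reverses the inequality) gives exactly~(\ref{eq:num_iteration}).

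The main obstacle is the passage from this generic descent-method reasoning to the precise constant $z$: the direction $\boldsymbol{d}(k)$ generated by~(\ref{eq:bs_selection}) is a conditional-gradient (Frank--Wolfe-type) direction rather than $-\bigtriangledown\psi(\boldsymbol{\rho}(k))$, so the step-size lower bound and the decrement estimate need the ratio $g_k/\|\boldsymbol{d}(k)\|^2$ to be controlled. I would handle this by noting that~(\ref{eq:bs_selection}) makes $\boldsymbol{M}(\boldsymbol{\rho}(k))$ a minimizer of $\langle\bigtriangledown\psi(\boldsymbol{\rho}(k)),\cdot\rangle$ over $\tilde{\mathcal{F}}$, so $g_k=\langle\bigtriangledown\psi(\boldsymbol{\rho}(k)),\boldsymbol{\rho}(k)-\boldsymbol{M}(\boldsymbol{\rho}(k))\rangle\ge\langle\bigtriangledown\psi(\boldsymbol{\rho}(k)),\boldsymbol{\rho}(k)-\boldsymbol{\rho}^{*}\rangle\ge\psi(\boldsymbol{\rho}(k))-\psi(\boldsymbol{\rho}^{*})$ by convexity, and by bounding $\|\boldsymbol{d}(k)\|$ over the bounded set $\tilde{\mathcal{F}}$. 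A secondary point is to treat the two outcomes of the backtracking loop separately---acceptance of the full step $t(k)=1$ versus a strictly backtracked $t(k)$---and to verify that each gives a contraction factor at most $z$.
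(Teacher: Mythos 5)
Your skeleton is the paper's skeleton: read Alg.~\ref{alg:bsa_alg} as an Armijo backtracking line search on the strongly convex $\psi$, use $\bigtriangledown^{2}\psi\preceq Q\boldsymbol{I}$ to show the loop stops either with $\delta(k)=0$ or with $1-\delta(k)\geq\xi/Q$, combine the accepted decrease with the strong-convexity inequality $\Vert\bigtriangledown\psi(\boldsymbol{\rho}(k))\Vert_{2}^{2}\geq 2q(\psi(\boldsymbol{\rho}(k))-\psi(\boldsymbol{\rho}^{*}))$ to obtain the contraction factor $z=1-\min\{2q\varsigma,2q\varsigma\xi/Q\}$, and unroll geometrically to get Eq.~(\ref{eq:num_iteration}). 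Your second and third steps reproduce the appendix almost line by line.

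The one place you diverge is the step you yourself flag as the ``main obstacle'' and leave open, and it is worth being precise about how the paper treats it. The Armijo test~(\ref{eq:back_line_search}) yields a decrement $\varsigma(1-\delta(k))g_{k}$ with $g_{k}=-\langle\bigtriangledown\psi(\boldsymbol{\rho}(k)),\boldsymbol{M}(\boldsymbol{\rho}(k))-\boldsymbol{\rho}(k)\rangle$, not $\varsigma(1-\delta(k))\Vert\bigtriangledown\psi\Vert_{2}^{2}$, and the curvature bound controls the accepted step only through the ratio $g_{k}/\Vert\boldsymbol{M}(\boldsymbol{\rho}(k))-\boldsymbol{\rho}(k)\Vert^{2}$. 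The paper's appendix disposes of this in a single sentence: since $\boldsymbol{M}(\boldsymbol{\rho}(k))-\boldsymbol{\rho}(k)$ is a descent direction (Lemma~\ref{thm:descent_direction}), it ``can be replaced by $-\bigtriangledown\psi(\boldsymbol{\rho})$,'' after which the rest is the verbatim gradient-descent-with-backtracking analysis from the cited convex-optimization text. So where you honestly stop, the paper substitutes by fiat; your concern is legitimate, because being a descent direction does not make the conditional-gradient direction interchangeable with $-\bigtriangledown\psi$ in either the step-size bound or the decrement bound. Note also that your sketched repair would not recover the stated constant: the bound $g_{k}\geq\psi(\boldsymbol{\rho}(k))-\psi(\boldsymbol{\rho}^{*})$ together with a diameter bound on $\mathcal{\tilde{F}}$ is the standard Frank--Wolfe estimate and by itself gives only a sublinear $O(1/k)$-type guarantee (or a linear rate with a different constant under extra assumptions), whereas $z$ as stated is tied to measuring the per-step decrease in $\Vert\bigtriangledown\psi\Vert_{2}^{2}$. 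In short: same approach as the paper, with the single unproved step in your write-up being exactly the step the paper closes only by assertion.
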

\begin{proof}
The lemma is proved in Appendix \ref{app:proof_converge_speed}.
\end{proof}

Eq.~(\ref{eq:num_iteration}) indicates that $\psi(\boldsymbol{\rho})$ converges at least as fast as a geometric series. Such convergence is called linear convergence in the context of iterative numerical method~\cite{Boyd:2004:CVX}. The number of iterations required for $\psi(\boldsymbol{\rho})$ to converge depends on the gap between $\psi(\boldsymbol{\rho}(1))$ and $\psi(\boldsymbol{\rho}^{*})$, $\epsilon$, and $z$. Given the gap and the value of $\epsilon$, a smaller $z$ enables faster convergence. By properly selecting $\varsigma$ and $\xi$, we can reduce the value of $z$, and thus reduce the number of iterations required for the convergence. However, how to optimize the value of $\varsigma$ and $\xi$ is beyond the scope of this paper.

\subsection{The optimality of vGALA}
\label{subsec:optimal}
Since the vBSs' traffic load vector converges to $\boldsymbol{\rho}{*}$, we show that the corresponding user association minimizes $\psi(\boldsymbol{\rho})$.
\begin{theorem}
\label{thm:opt_vgala}
Suppose $\mathcal{F}$ is not empty and the traffic load vector converges to $\boldsymbol{\rho}{*}$, the user association corresponding to $\boldsymbol{\rho}{*}$ minimizes $\psi(\boldsymbol{\rho})$.
\end{theorem}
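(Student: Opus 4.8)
The plan is to show that the limit point $\boldsymbol{\rho}^{*}$ satisfies the first-order (variational-inequality) optimality condition for the \emph{relaxed} convex program $\min_{\boldsymbol{\rho}\in\tilde{\mathcal{F}}}\psi(\boldsymbol{\rho})$, and then to push this conclusion back to the original feasible set $\mathcal{F}$. The structural facts already in hand are: $\tilde{\mathcal{F}}$ is convex (Lemma~\ref{thm:feasible_set}), $\psi$ is strongly convex on $\tilde{\mathcal{F}}$ (Lemma~\ref{thm:convexity}), $\boldsymbol{M}(\boldsymbol{\rho}(k))$ is a descent direction whenever it differs from $\boldsymbol{\rho}(k)$ (Lemma~\ref{thm:descent_direction}), and both $\boldsymbol{\rho}(k)$ and $\boldsymbol{M}(\boldsymbol{\rho}(k))$ converge to $\boldsymbol{\rho}^{*}\in\mathcal{F}$ with $\psi(\boldsymbol{\rho}(k))\downarrow\psi(\boldsymbol{\rho}^{*})$ (Theorem~\ref{thm:alg_converge_thm} and Corollary~\ref{thm:alg_converge_col}).

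\emph{Step 1 (fixed point).} First I would argue that $\boldsymbol{M}(\boldsymbol{\rho}^{*})=\boldsymbol{\rho}^{*}$. Because each $\phi_{j}(\cdot)$ is continuous and $\rho_{j}(k)\to\rho_{j}^{*}$, at every location $x$ for which $r_{j}(x)/\phi_{j}(\rho_{j}^{*})$ has a unique maximizer $b^{*}(x)$, the selection rule~(\ref{eq:bs_selection}) gives $b^{k}(x)=b^{*}(x)$ for all large $k$; the locations admitting ties form a negligible set, so $\mathbf{1}[b^{k}(x)=j]\to\mathbf{1}[b^{*}(x)=j]$ a.e. Dominated convergence (the densities $\lambda(x)\nu(x)/r_{j}(x)$ being integrable over $\mathcal{A}$) then yields $M_{j}(\boldsymbol{\rho}(k))\to M_{j}(\boldsymbol{\rho}^{*})$, where the latter is obtained from rule~(\ref{eq:bs_selection}) evaluated at $\boldsymbol{\rho}^{*}$; comparing with $M_{j}(\boldsymbol{\rho}(k))\to\rho_{j}^{*}$ forces $\boldsymbol{M}(\boldsymbol{\rho}^{*})=\boldsymbol{\rho}^{*}$. (Equivalently, were $\boldsymbol{M}(\boldsymbol{\rho}^{*})\neq\boldsymbol{\rho}^{*}$, Lemma~\ref{thm:descent_direction} would make it a strict descent direction at $\boldsymbol{\rho}^{*}$, which via the sufficient-decrease test~(\ref{eq:back_line_search}) contradicts $\psi(\boldsymbol{\rho}(k))\downarrow\psi(\boldsymbol{\rho}^{*})$.) Denote by $\eta_{j}^{*}(x)\in\{0,1\}$ the association produced by~(\ref{eq:bs_selection}) at $\boldsymbol{\rho}^{*}$; it is feasible and, by the fixed-point property, induces exactly $\boldsymbol{\rho}^{*}$.

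\emph{Step 2 (variational inequality).} Next I would verify $\langle\bigtriangledown\psi(\boldsymbol{\rho}^{*}),\boldsymbol{\rho}-\boldsymbol{\rho}^{*}\rangle\geq 0$ for every $\boldsymbol{\rho}\in\tilde{\mathcal{F}}$. Using $\phi_{j}(\rho_{j}^{*})=\partial\psi/\partial\rho_{j}|_{\boldsymbol{\rho}^{*}}$ (Corollary~\ref{thm:alg_converge_col}) and $\rho_{j}=\int_{\mathcal{A}}\lambda(x)\nu(x)\eta_{j}(x)/r_{j}(x)\,dx$, the inner product equals
\begin{equation*}
\int_{\mathcal{A}}\lambda(x)\nu(x)\sum_{j\in\mathcal{B}}\frac{\phi_{j}(\rho_{j}^{*})}{r_{j}(x)}\bigl(\eta_{j}(x)-\eta_{j}^{*}(x)\bigr)\,dx .
\end{equation*}
For each $x$ the sum decouples across BSs. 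Since every $w_{j}(\rho_{j})L(\rho_{j})$ is a positive, strictly increasing function of $\rho_{j}$ (cf.~(\ref{eq:deritive_L})), $\phi_{j}(\rho_{j}^{*})>0$, so $b^{*}(x)=\arg\max_{j}r_{j}(x)/\phi_{j}(\rho_{j}^{*})=\arg\min_{j}\phi_{j}(\rho_{j}^{*})/r_{j}(x)$; hence the unit mass $\eta^{*}(x)$ placed on $b^{*}(x)$ minimizes $\sum_{j}\tfrac{\phi_{j}(\rho_{j}^{*})}{r_{j}(x)}\eta_{j}(x)$ over all weights $\eta_{j}(x)\in[0,1]$ with $\sum_{j}\eta_{j}(x)=1$, which makes the integrand non-negative. (The convention $r_{j}(x)=\zeta$ for $j\notin\bar{\mathcal{B}}(x)$ automatically excludes pathloss-violating BSs from the minimizer.)

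\emph{Step 3 (conclusion) and the hard part.} By Lemmas~\ref{thm:feasible_set} and~\ref{thm:convexity}, $\tilde{\mathcal{F}}$ is convex and $\psi$ is convex on it, so the inequality of Step~2 is precisely the condition for $\boldsymbol{\rho}^{*}$ to be a global minimizer of $\psi$ over $\tilde{\mathcal{F}}$; strong convexity further makes it the unique one. Because $\mathcal{F}\subseteq\tilde{\mathcal{F}}$ and $\boldsymbol{\rho}^{*}\in\mathcal{F}$ (Step~1), $\boldsymbol{\rho}^{*}$ also minimizes $\psi$ over $\mathcal{F}$, i.e., the $\{0,1\}$ user association $\eta^{*}$ corresponding to $\boldsymbol{\rho}^{*}$ solves~(\ref{eq:object_network_goal}). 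I expect Step~1 to be the main obstacle: the map $\boldsymbol{\rho}\mapsto\boldsymbol{M}(\boldsymbol{\rho})$ is defined through an $\arg\max$ and is generally discontinuous, so passing to the limit requires controlling the tie set and justifying the interchange of limit and integral — the same delicate point that is only touched upon in the proof of Theorem~\ref{thm:alg_converge_thm}. Steps~2 and~3 are then routine convex-analysis bookkeeping.
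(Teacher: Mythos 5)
Your proposal is correct and, in its core (Steps~2--3), follows essentially the same route as the paper: the paper's proof is precisely the variational inequality $\langle\bigtriangledown{\psi(\boldsymbol{\rho})}|_{\boldsymbol{\rho}=\boldsymbol{\rho}^{*}},\boldsymbol{\rho}-\boldsymbol{\rho}^{*}\rangle\geq 0$, obtained by writing the inner product as a per-location integral and observing that the selection rule in Eq.~(\ref{eq:bs_selection}) places the unit mass on the BS minimizing $\phi_{j}(\rho^{*}_{j})/r_{j}(x)$, followed by an appeal to convexity of $\psi$. Your Step~1 (that $\boldsymbol{\rho}^{*}$ is a fixed point of $\boldsymbol{M}$, so the association ``corresponding to $\boldsymbol{\rho}^{*}$'' really is the one produced by Eq.~(\ref{eq:bs_selection}) evaluated at $\boldsymbol{\rho}^{*}$, as asserted in Eq.~(\ref{eq:eta_opt})) is left implicit in the paper, so making it explicit is a tightening of the same argument rather than a different approach.
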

\begin{proof}
\label{prf:opt_vgala}
Denote $\boldsymbol{\eta}^{*}=\{\eta^{*}_{j}(x)|\eta^{*}_{j}(x)=\{0,1\},\; \forall j \in\mathcal{B}, \;\forall x\in \mathcal{A}\}$ and $\boldsymbol{\eta}=\{\eta_{j}(x)|\eta_{j}(x)=\{0,1\},\; \forall j \in\mathcal{B}, \;\forall x\in \mathcal{A}\}$ as the user association corresponding to $\boldsymbol{\rho}{*}$ and any other traffic load vector $\boldsymbol{\rho} \in \mathcal{F}$, respectively.

Let $\triangle \boldsymbol{\rho}^{*} =\boldsymbol{\rho}-\boldsymbol{\rho}^{*}$.
Since $\psi(\boldsymbol{\rho})$ is a convex function over $\boldsymbol{\rho}$, proving the theorem is equivalent to prove
\begin{equation}
\langle\bigtriangledown{\psi(\boldsymbol{\rho})}|_{\boldsymbol{\rho}=\boldsymbol{\rho}^{*}},\triangle \boldsymbol{\rho}^{*} \rangle\geq 0.
\end{equation}
\begin{eqnarray}
\label{eq:prf_opt_vgala}
&&\langle\bigtriangledown{\psi(\boldsymbol{\rho})}|_{\boldsymbol{\rho}=\boldsymbol{\rho}^{*}},\triangle \boldsymbol{\rho}^{*}\rangle\\
&&=\sum_{j\in\mathcal{B}}(\rho_{j}-\rho_{j}^{*})\phi_{j}(\rho^{*}_{j}) \nonumber\\
&&=\sum_{j\in\mathcal{B}}\frac{\int_{x \in \mathcal{A}}\lambda(x)\nu(x)(\eta_{j}(x)-\eta^{*}_{j}(x))dx}{r_{j}(x)\phi^{-1}_{j}(\rho^{*}_{j})} \nonumber\\
&&=\int_{x \in \mathcal{A}}\lambda(x)\nu(x)\sum_{j\in\mathcal{B}}\frac{\eta_{j}(x)-\eta^{*}_{j}(x)}{r_{j}(x)\phi^{-1}_{j}(\rho^{*}_{j})}dx. \nonumber
\end{eqnarray}
According to the user side algorithm,
\begin{equation}
\label{eq:eta_opt}
\setlength{\nulldelimiterspace}{0pt}
\eta^{*}_{j}(x)=\left\{\begin{IEEEeqnarraybox}[\relax][c]{ls}
1,
\; &for $j=\arg max_{i\in\mathcal{B}}\frac{r_{i}(x)}{\phi_{i}(\rho^{*}_{i})}$  \\
0, \; &for $otherwise$,%
\end{IEEEeqnarraybox}\right.
\end{equation}
Therefore,
\begin{equation}
\sum_{j\in\mathcal{B}}\frac{\eta^{*}_{j}(x)}{r_{j}(x)\phi^{-1}_{j}(\rho^{*}_{j})}\leq\sum_{j\in\mathcal{B}}\frac{\eta_{j}(x)}{r_{j}(x)\phi^{-1}_{j}(\rho^{*}_{j})}.
\end{equation}
Hence, $\langle\bigtriangledown{\psi(\boldsymbol{\rho})}|_{\boldsymbol{\rho}=\boldsymbol{\rho}^{*}},\triangle \boldsymbol{\rho}^{*} \rangle\geq 0$.
\end{proof}

%
\subsection{The generalization of vGALA}
In determining the user association, the vGALA scheme strives for a balance between the green energy utilization and the network performance. In the problem formulation, $w_{j}(\rho_{j})$ and $L(\rho_{j})$ model the green energy utilization and the network performance, respectively. Since $w_{j}(\rho_{j})$ and $L(\rho_{j})$ are functions of the traffic load $\rho_{j}$, they are coupled by $\rho_{j}$. $L(\rho_{j})$ is a general latency indicator derived under the M/G/1 processor sharing queue model. In practical networks, traffic arrivals may follow arbitrary distributions rather than a Poisson distribution. In addition, the network operators may aim to represent the network performance with other metrics instead of the average traffic delivery latency. It is desirable that the vGALA framework can be applied to a collection of network performance models. Denote $f(\rho_{j})$ as a function of the traffic load $\rho_{j}$ that models the $j$th BS's performance. Define the user association problem with a generalized network performance model, $f(\rho_{j})$, as the UAG problem expressed as

\begin{eqnarray}
\label{eq:object_network_goal_general}
\min_{\boldsymbol{\rho}} && \sum_{j \in \mathcal{B}}w_{j}(\rho_{j})f(\rho_{j})\\
\label{eq:constraint_omge_general}
subject\; to: && 0\leq\rho_{j}\leq 1-\epsilon.
\end{eqnarray}

\begin{lemma}
\label{thm:gen_lemma}
If $f(\rho_{j})$ is positive, convex and non decreasing over $\rho_{j},\; \forall j\in\mathcal{B}$, $\tilde{\psi}(\boldsymbol{\rho})=\sum_{j\in\mathcal{B}}w_{j}(\rho_{j})f(\rho_{j})$ is convex over $\boldsymbol{\rho}\in \mathcal{\tilde{F}}$.
\end{lemma}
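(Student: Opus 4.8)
The plan is to establish convexity of $\tilde\psi(\boldsymbol{\rho})=\sum_{j\in\mathcal{B}}w_j(\rho_j)f(\rho_j)$ by showing each summand $g_j(\rho_j):=w_j(\rho_j)f(\rho_j)$ is a convex function of the single variable $\rho_j$; since a sum of convex functions (each depending on a distinct coordinate) is convex over the product domain, and $\tilde{\mathcal F}$ is convex by Lemma~\ref{thm:feasible_set}, this suffices. So the whole argument reduces to a one-dimensional claim: if $f$ is positive, convex and nondecreasing on $[0,1-\epsilon]$, then $\rho\mapsto e^{\kappa\theta_j(\rho-\hat\rho_j)}f(\rho)$ is convex there.

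\textbf{Key steps.} First I would recall from Eq.~(\ref{eq:latency_weight}) that $w_j(\rho_j)=e^{\kappa\theta_j(\rho_j-\hat\rho_j)}$, an exponential in $\rho_j$ with nonnegative rate constant $c_j:=\kappa\theta_j\ge 0$ (recall $0\le\theta_j\le1$ and $\kappa$ is taken positive in the paper). Thus $w_j$ is itself positive, convex, nondecreasing and $C^2$. Second, I would compute the second derivative of the product: with $w=e^{c\rho}$ (suppressing the index $j$ and the constant shift $\hat\rho_j$, which only multiplies $w$ by a positive constant),
\begin{equation}
(wf)'' = w''f + 2w'f' + wf'' = e^{c\rho}\bigl(c^2 f(\rho) + 2c f'(\rho) + f''(\rho)\bigr).
\end{equation}
Third, I would argue each of the three bracketed terms is nonnegative on $[0,1-\epsilon]$: $c^2 f(\rho)\ge 0$ because $c^2\ge0$ and $f>0$; $2cf'(\rho)\ge0$ because $c\ge0$ and $f'\ge0$ (nondecreasing); and $f''(\rho)\ge0$ because $f$ is convex. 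Hence $(wf)''\ge0$, so $g_j$ is convex in $\rho_j$. Finally, summing over $j$ and invoking convexity of $\tilde{\mathcal F}$ gives convexity of $\tilde\psi$ on $\tilde{\mathcal F}$.

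\textbf{Handling nonsmooth $f$.} The main technical wrinkle is that the lemma's hypotheses on $f$ do not explicitly include twice differentiability, so the clean $(wf)''\ge0$ computation above needs a fallback. The cleanest route is to avoid derivatives of $f$ altogether: note $w_j$ is a positive, convex, nondecreasing function, and it is a standard fact that the product of two positive, convex, nondecreasing functions on an interval is again convex, proved directly from the definition of convexity via the inequality $g(\lambda a+(1-\lambda)b)\le \lambda g(a)+(1-\lambda)g(b)$ together with the rearrangement inequality $(w(a)-w(b))(f(a)-f(b))\ge0$ (same monotonicity) which controls the cross terms. I expect this elementary product-of-convex-monotone-functions argument to be the part that needs the most care to state rigorously, though for the concrete $w_j=e^{c_j(\rho_j-\hat\rho_j)}$ used here one may alternatively just assume $f\in C^2$ (as $L(\rho_j)$ is) and present the short second-derivative computation, which is what I would do in the interest of brevity.
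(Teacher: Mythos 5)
Your proposal is correct and follows essentially the same route as the paper: differentiate each summand $w_j(\rho_j)f(\rho_j)$ twice, observe that $w_j''f+2w_j'f'+w_jf''\ge 0$ term by term since $w_j$ and $f$ are positive, nondecreasing and convex, note the cross partials vanish so the Hessian is (positive semi)definite, and conclude convexity over the convex set $\mathcal{\tilde F}$. The only differences are cosmetic: the paper additionally asserts a strictly positive lower bound $q$ (strong convexity, used later for the convergence rate), while your remark on handling non-$C^2$ $f$ via the product of positive, convex, nondecreasing functions is a reasonable extra safeguard the paper does not bother with.
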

\begin{proof}
Since $f(\rho_{j})$ is positive, convex and non decreasing, $f(\rho_{j})>0$, $f^{\prime\prime}(\rho_{j})\geq 0$, and $f^{\prime}(\rho_{j})\geq 0$. Because $w_{j}^{\prime\prime}(\rho_{j})> 0$, $w_{j}^{\prime}(\rho_{j})> 0$, and $w_{j}(\rho_{j})> 0$,
\begin{align}
&\frac{\partial^{2}\sum_{j\in\mathcal{B}}w_{j}(\rho_{j})f(\rho_{j})}{\partial\rho_{j}^{2}}\nonumber\\
&=w_{j}^{\prime\prime}(\rho_{j})f(\rho_{j})+2w_{j}^{\prime}(\rho_{j})f^{\prime}(\rho_{j})+w_{j}(\rho_{j})f^{\prime\prime}(\rho_{j}) \nonumber\\
&\geq q.
\end{align}
Here, $q$ is a positive number. Let $\boldsymbol{I}$ be an identity matrix.
Since
\begin{equation}
\frac{\partial^{2}\sum_{j\in\mathcal{B}}w_{j}(\rho_{j})f(\rho_{j})}{\partial\rho_{j}\partial\rho_{i}}=0, \; \forall i\neq j,
\end{equation}
$\bigtriangledown^{2}\tilde{\psi}(\boldsymbol{\rho})\geq q\boldsymbol{I}$. Therefore, $\tilde{\psi}(\boldsymbol{\rho})$ is a strong convex function over $\boldsymbol{\rho}$, $\boldsymbol{\rho}\in \mathcal{\tilde{F}}$.
\end{proof}

\begin{theorem}
\label{thm:gen_vgala}
If the $j$th BS's network performance metric, $f(\rho_{j})$, is positive, convex and non decreasing over $\rho_{j},\; \forall j\in\mathcal{B}$, the UAG problem can be solve by the vGALA scheme.
\end{theorem}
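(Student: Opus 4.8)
The plan is to show that Theorem~\ref{thm:gen_vgala} follows by essentially replaying the entire vGALA analysis (Sections~\ref{subsec:convergence}--\ref{subsec:optimal}) with $L(\rho_j)$ replaced by the generic performance metric $f(\rho_j)$, checking at each step that the only properties of $L(\rho_j)$ actually used are the ones now guaranteed by hypothesis: positivity, convexity, and monotone non-decrease. The key structural observation is that the vGALA machinery never exploits the specific M/G/1 form of $L(\rho_j)$; it uses (i) strong convexity of the objective, (ii) that the per-BS contribution is separable so the Hessian is diagonal, and (iii) that the ``operation status'' $\phi_j$ is the partial derivative of the objective and is \emph{positive} along the feasible set, so that a smaller $r_j(x)/\phi_j$ ratio corresponds to a meaningful BS price. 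Lemma~\ref{thm:gen_lemma} already delivers (i) and (ii) for $\tilde\psi(\boldsymbol\rho)=\sum_j w_j(\rho_j)f(\rho_j)$.

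\medskip

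\noindent The steps, in order, are as follows. First I would redefine the generalized operation status $\tilde\phi_j(\rho_j)=\partial\tilde\psi(\boldsymbol\rho)/\partial\rho_j = w_j'(\rho_j)f(\rho_j)+w_j(\rho_j)f'(\rho_j)$, and the generalized BS-selection rule $b^k(x)=\arg\max_{j}r_j(x)/\tilde\phi_j(\rho_j(k))$, together with the analogous perceived-load map $\boldsymbol M(\boldsymbol\rho)$ and the same convex combination update~\eqref{eq:traffic_updates} with the Armijo-type test~\eqref{eq:back_line_search} written in terms of $\tilde\psi$ and $\tilde\phi_j$. Second, I would verify that $\tilde\phi_j(\rho_j)>0$ on $\tilde{\mathcal F}$: this is immediate since $w_j'>0$, $w_j>0$, $f\ge 0$ (positive), and $f'\ge 0$, so $\tilde\phi_j\ge w_j f' \ge 0$, and strictly positive because $w_j'f>0$. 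This positivity is exactly what makes the selection rule well defined and is the generalization of Eq.~\eqref{eq:deritive_L}. Third, I would re-derive Lemma~\ref{thm:descent_direction}: the chain of equalities in~\eqref{eq:prf_descent_direction} goes through verbatim with $\phi_j$ replaced by $\tilde\phi_j$, because $M_j(\boldsymbol\rho(k))-\rho_j(k)=\int_{\mathcal A}\lambda\nu(\hat\eta_j-\eta_j)/r_j\,dx$ is independent of the choice of performance metric, and the inequality~\eqref{eq:sum_leq_0} holds because $b^k(x)$ maximizes $r_j/\tilde\phi_j$ and $\tilde\phi_j>0$. Fourth, with the descent property in hand, Theorem~\ref{thm:alg_converge_thm} (monotone decrease of $\tilde\psi\ge 0$ forces convergence; convexity of $\tilde{\mathcal F}$ and the update form force the limit into $\mathcal F$), Corollary~\ref{thm:alg_converge_col}, Lemma~\ref{thm:converge_speed} (needing only $q\boldsymbol I\preceq\nabla^2\tilde\psi\preceq Q\boldsymbol I$, i.e.\ strong convexity plus boundedness of the Hessian on the compact set $\tilde{\mathcal F}$), and Theorem~\ref{thm:opt_vgala} (first-order optimality via $\langle\nabla\tilde\psi(\boldsymbol\rho^*),\boldsymbol\rho-\boldsymbol\rho^*\rangle\ge0$, again using only $\tilde\phi_j>0$ and the $\arg\max$ selection) all carry over line for line. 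Concluding that the UAG problem is solved by vGALA then just amounts to assembling these.

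\medskip

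\noindent \textbf{The main obstacle} is not any single hard inequality but making the reduction airtight: one must check that \emph{every} place where the original proofs invoked a concrete property of $L(\rho_j)$ — e.g.\ $dL/d\rho_j>0$ in the discussion after~\eqref{eq:deritive_L}, the closed form of $\phi_j$ in~\eqref{eq:user_ass_message}, and the explicit constant $q=4e^{-1}$ in Lemma~\ref{thm:convexity} — is replaced by a property that the hypotheses of Theorem~\ref{thm:gen_vgala} actually provide. The delicate points are (a) that an explicit strong-convexity constant $q>0$ still exists: Lemma~\ref{thm:gen_lemma} gives $\nabla^2\tilde\psi\succeq q\boldsymbol I$ with $q=\min_j\inf_{\rho_j\in[0,1-\epsilon]}\{w_j''f+2w_j'f'+w_jf''\}>0$, which is attained and positive because the bracket is continuous in $\rho_j$, the domain $[0,1-\epsilon]$ is compact, and $w_j''>0$ with $f>0$ makes it strictly positive even when $f'=f''=0$; and (b) that an upper bound $Q$ on the Hessian exists, which again follows from continuity of $w_j,w_j',w_j'',f,f',f''$ on the compact domain (so one should note $f$ is assumed twice differentiable, or argue with subgradients). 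Once these two constants are exhibited, Lemma~\ref{thm:converge_speed}'s linear-rate bound applies with $z=1-\min\{2q\varsigma,2q\varsigma\xi/Q\}$, and the generalized vGALA scheme converges to the global optimum of the UAG problem, completing the proof.
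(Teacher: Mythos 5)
Your proposal is correct and follows essentially the same route as the paper: the paper's own proof simply invokes Lemma~\ref{thm:gen_lemma} to establish strong convexity of $\tilde{\psi}(\boldsymbol{\rho})$ and asserts that the vGALA convergence and optimality analysis then carries over unchanged. Your step-by-step replay merely makes explicit the details the paper leaves implicit (positivity of the generalized operation status $\tilde{\phi}_{j}$, existence of the Hessian bounds $q$ and $Q$ on the compact relaxed feasible set), which is a more careful rendering of the same reduction.
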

\begin{proof}
\label{prf:gen_vgala}
In order to guarantee the convergence and the optimality of the vGALA scheme, $\tilde{\psi}(\boldsymbol{\rho})$ has to be strongly convex over $\boldsymbol{\rho}\in\mathcal{\tilde{F}}$. According to the above lemma, if $f(\rho_{j})$ is positive, convex and non decreasing, $\tilde{\psi}(\boldsymbol{\rho})$ is a strong convex function. Thus, the vGALA framework can be utilized to solve the UAG problem in which $f(\rho_{j})$ is the $j$th BS's network performance metric.
\end{proof}

\section{The Practicality of the vGALA Scheme}
\label{sec:tradeoff_admission}
In this section, we first present how to put the vGALA framework into practice and evaluate the assumptions made for developing the scheme. Then, we discuss two related issues on applying the vGALA scheme: the energy-latency trade-off and the admission control mechanism.

\subsection{Put into practice}
\label{subsec:put_into_practice}
In practical cellular networks, the traffic load balancing among BSs is usually triggered by network-level events, e.g., some BSs are congested while others are lightly loaded, rather than by user-level events, e.g., a few users' movement and data rate changes. Since a BS's traffic loads are determined by the average traffic load density of its coverage area, without considering green energy, it is reasonable to reduce a BS's coverage area to avoid traffic congestion if the traffic load density of the BS's coverage area is increasing. Therefore, a BS's traffic load can be derived based on the location-based traffic load density that reflect the traffic load density at a location. Thus, for practical implementation, the vGALA scheme collects the location-based traffic load density and the network green energy information in the first phase as shown in Fig.~\ref{fig:vgala_practic}. Given a specific location, it is realistic to assume that BSs' downlink data rates to users at the location are not changing during a traffic load balancing period. Notice that on modeling the traffic load in the UA problem, we differentiate users by their locations. Therefore, the vGALA scheme is compatible with the input of the location-based traffic load density and the location-based downlink data rates.

\begin{figure}
\centering
\includegraphics[scale=0.33]{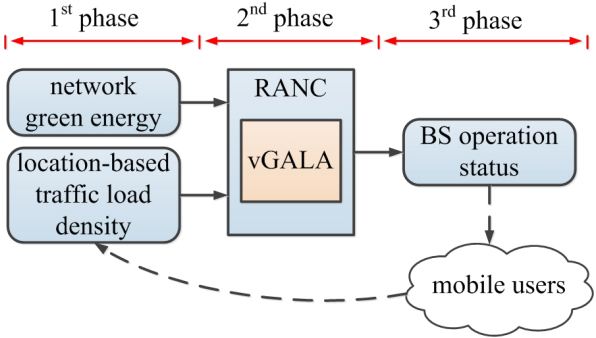}
\caption{The practical implementation of vGALA}
\label{fig:vgala_practic}
\vspace{-16pt}
\end{figure}

In the second phase, the vGALA scheme implemented in the RANC optimizes the user association and derives the optimal BS operation status based on the network information collected in the first phase. The optimization can be triggered either periodically or by some predefined events, e.g., a BS's traffic loads exceed a threshold or a BS's green energy utilization is lower than a threshold. What are the best strategies for triggering the traffic load balancing can be determined by network operators and is beyond the scope of this paper. The output of the second phase is the BS operation status, based on which the user association is determined in the third phase. In this phase, a user's BS association can be determined in either centralized or distributed fashion. In the first case, users send their data rate measurements to the RANC, and the RANC determines the users' BS associations based on the BS operation status and the users' date rates. In the second case, the RANC may simply let BSs broadcast their operation statuses, and based on which individual users decide their own BS associations. The users' BS selections may change the location-based traffic load density. Individual BSs translate the users' BS selections to location-based traffic load density and report it to the RANC.

In the vGALA scheme, the user association is optimized with the consideration of both the average traffic delivery latency and the green energy usage. From users' point of view (who may not care about the green energy usage), they may seek to maximize their performance and violate the BS selection rule in the vGALA scheme. However, the users, in fact, do not have any clue on maximizing their own QoS. According to Eq.~(\ref{eq:bs_selection}), a user's BS selection is based on both $r_{j}(x)$ and $\phi_{j}(\rho_{j})$. Here, $\phi_{j}(\rho_{j})$ is determined by both the $j$th BS's traffic loads and its available green energy. A user's average traffic delivery latency is determined by both the downlink data rate and the traffic loads of the associated BS. Since the users do not know the traffic loads of BSs, the users have no clue about which BS can provide them the best QoS. Simply selecting a BS with the largest $r_{j}(x)$ may lead the users to a highly congested BS and degrade the users' QoS. Thus, the users do not have obvious incentives to counterfeit their measurement reports.

\subsection{The energy-latency trade-off adaptation}
\label{subsec:energy_latency_trade}
The vGALA scheme provides two parameters for adapting the trade-off between the on-grid power consumption and the average traffic delivery latency. The parameters are $\theta$ and $\kappa$. $\theta$ is the energy-latency coefficient of a BS. It reflects individual BSs' operation strategies. A BS with a large $\theta$ ($\theta \rightarrow 1$) indicates that the BS is energy-sensitive. When a BS chooses a small $\theta$ ($\theta \rightarrow 0$), the BS is latency-sensitive. Therefore, by choosing the value of $\theta$, a BS adapts its sensitivity about the on-grid power consumption and the average traffic delivery latency. Hence, $\theta$ is chosen by individual BSs based on their operation strategies.

$\kappa$ is chosen by the RANC based on the global view of green energy status and the mobile traffic demands. Given $\theta$ and the available green energy, $w_{j}(\rho_{j})$ grows exponentially as the traffic demand increases. For a large $\kappa$, $w_{j}(\rho_{j})$ grows faster than it does with a small $\kappa$. This indicates that the vGALA scheme is more energy-sensitive when $\kappa$ is assigned a larger value. When $\kappa$ keeps increasing, the vGALA scheme will perform similarly as a solely energy-aware user association scheme. On the other hand, when $\kappa=0$, the vGALA scheme is a solely latency-aware user association scheme. In addition, since $0\leq \theta_{j}\leq 1$, $0\leq \theta_{j}\leq \kappa$. Thus, the value of $\kappa$ restricts the individual BSs' capability in adapting the energy-latency trade-off. The adaptation of $\kappa$ can be triggered by either green energy changes or the mobile traffic demand changes. For example, when the network experiences heavy traffic loads, the RANC will focus on balancing the traffic loads to reduce the network congestion. In this case, the RANC may choose a small $\kappa$ to give a high priority to the latency awareness in balancing the traffic loads. On the other hand, if the network experiences light traffic loads, the RANC may increase $\kappa$ to emphasize the green energy usage.

\begin{figure*}[ht]
\centering
\hspace*{\fill}
        \begin{subfigure}[b]{0.3\textwidth}
            \includegraphics[scale=0.3]{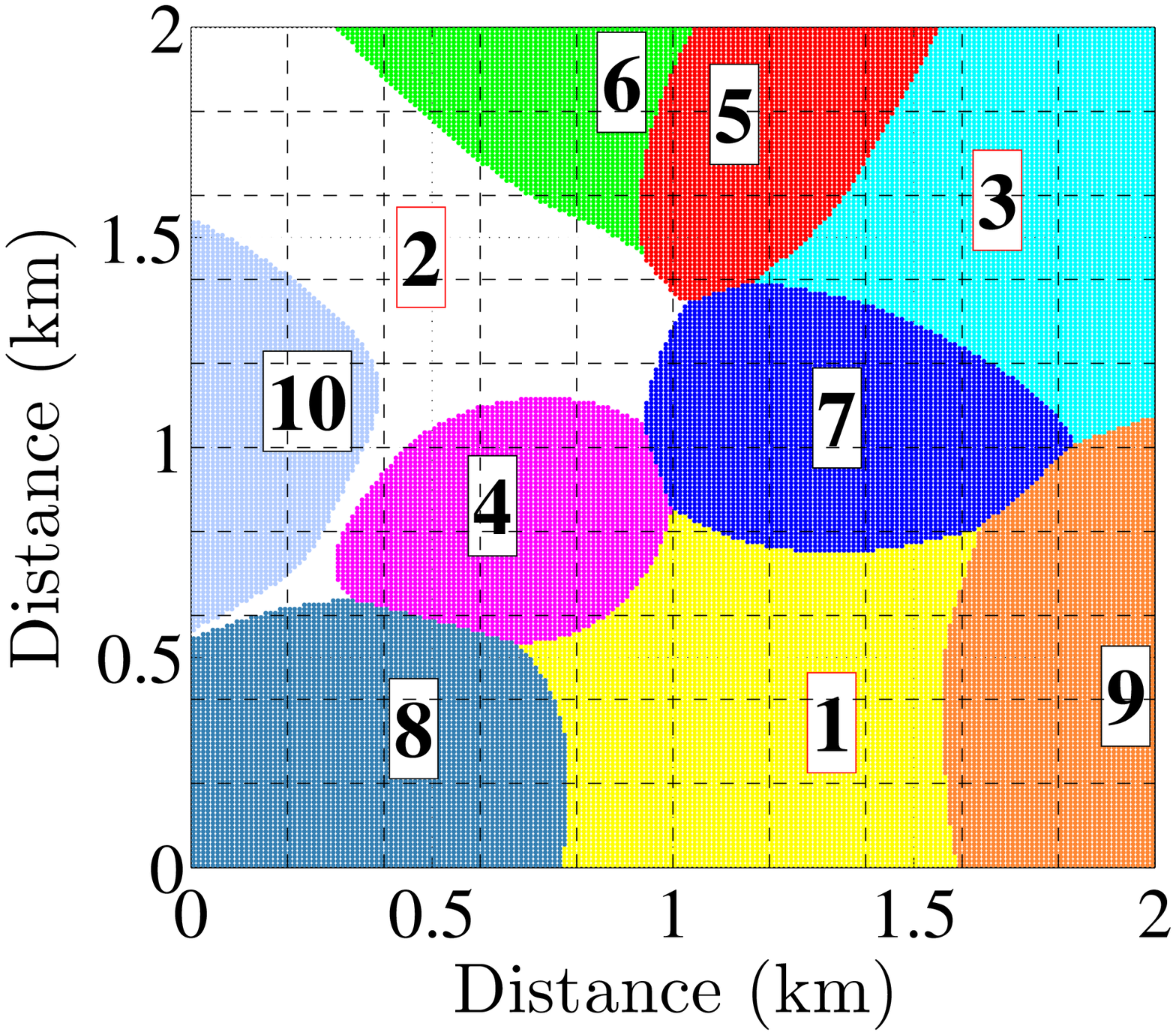}
            \caption{Green energy aware (GA).}
            \label{fig:ga_area}
        \end{subfigure}\hfill
        \begin{subfigure}[b]{0.3\textwidth}
            \includegraphics[scale=0.3]{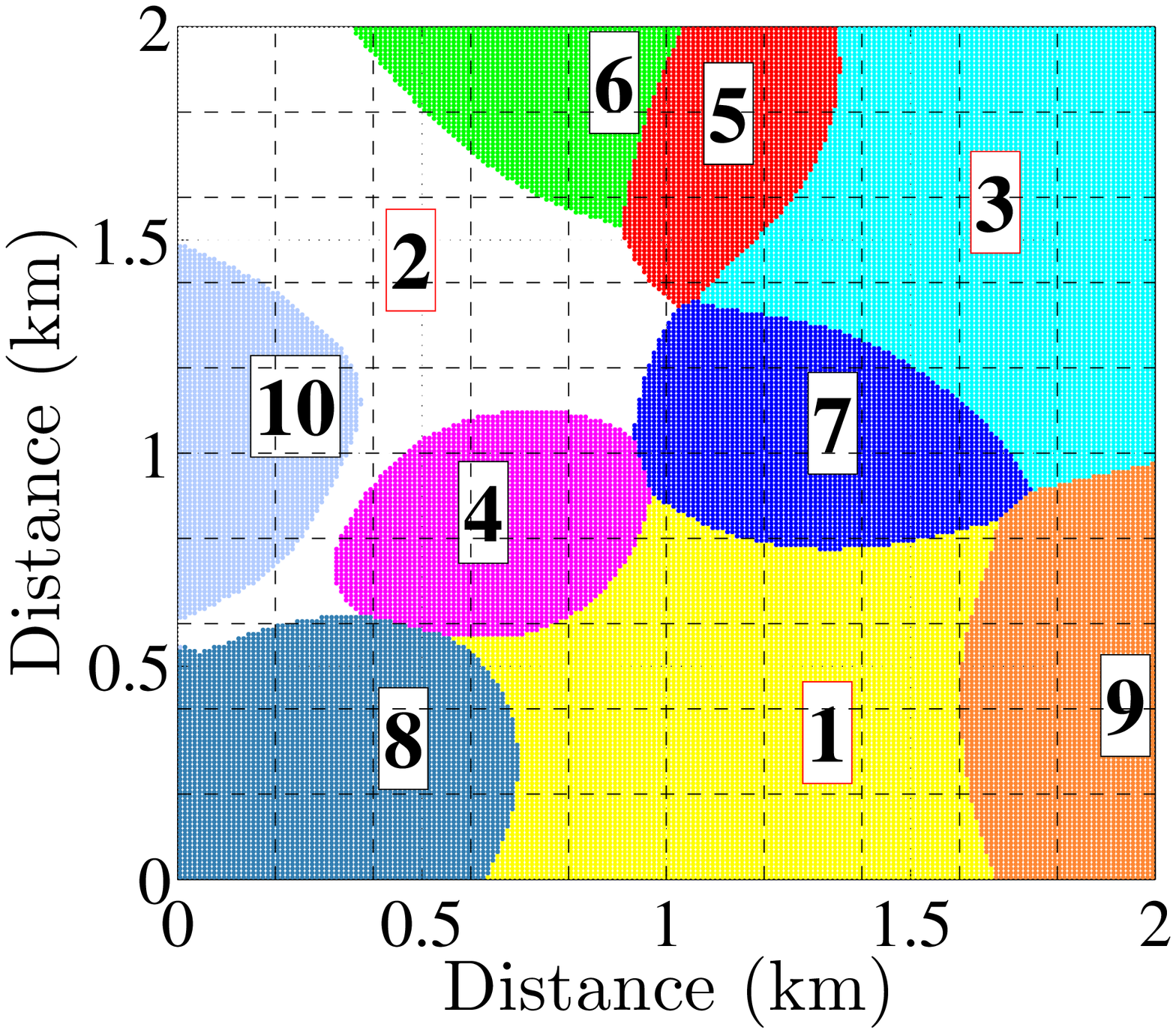}
            \caption{Latency aware (LA).}
            \label{fig:la_area}
        \end{subfigure}\hfill
        \begin{subfigure}[b]{0.3\textwidth}
            \includegraphics[scale=0.3]{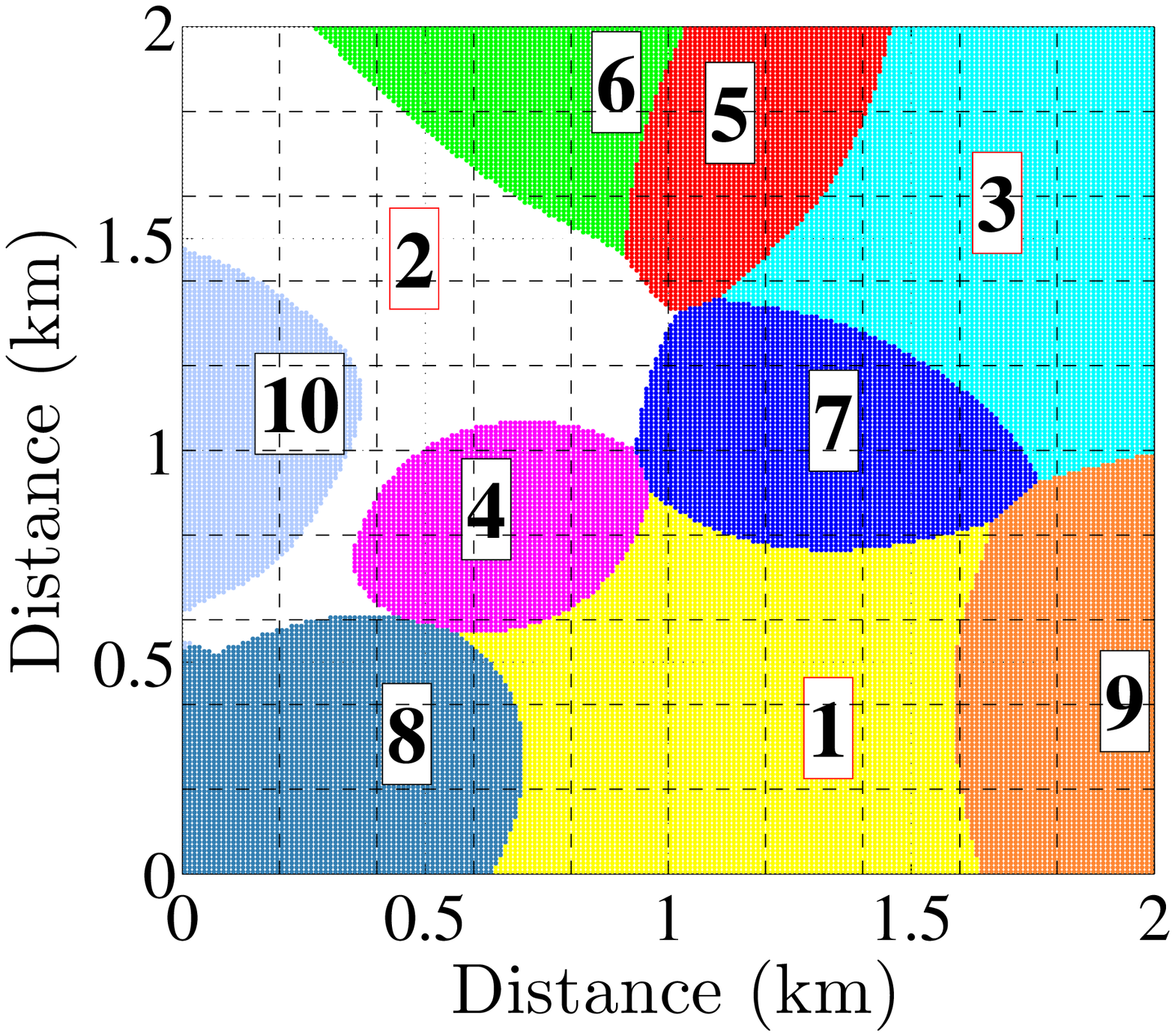}
            \caption{vGALA ($\theta=0.8$, $\kappa=4$).}
            \label{fig:vgala_area}
        \end{subfigure}\hfill
    \caption{%
       The coverage areas of different user association schemes.
     }%
   \label{fig:sim_1_area_comp}
   \vspace{-16pt}
\end{figure*}

\subsection{Admission control mechanism}
The necessary condition for the convergence and optimality of the vGALA scheme is that the UA problem is feasible. In other words, the BSs' traffic loads should be within the feasible set defined in Eq.~(\ref{eq:feasible_set}). When the traffic loads are beyond the network capacity, the UA problem is no longer feasible. As a result, the properties of the vGALA scheme will not hold. Therefore, the admission control mechanism is necessary for the vGALA scheme to ensure the feasibility of the UA problem. Thus, the purpose of proposing a simple admission control mechanism is to ensure that the vGALA scheme works even under very heavy traffic load condition (when the UA problem is not feasible) rather than to reduce either the energy consumption or average traffic delivery latency of the network.

Denote $\mu(x)$ as the admission control coefficient for a user located at $x$. $0\leq\mu(x)\leq 1$ indicates the probability that a user at location $x$ is admitted to the network. The RANC assigns $\mu(x)$ to a user at location $x$. $\mu(x)$ does not depend on the user's BS selection. In other words, no matter which BS is selected by a user, the user's admission control coefficient does not change. Thus, integrating admission control mechanism does not change the BS selection rule of the users. The coverage area of a BS, e.g., $\mathcal{\tilde{A}}_{j}(k)$, is still calculated by Eq.~(\ref{eq:bs_cover_area}). Owing to the admission control, the traffic load measurement in the $j$th vBS is revised as
\begin{equation}
\label{eq:ad_update_traffic}
M_{j}(\boldsymbol{\rho}(k))=\min{(\int_{x \in \mathcal{\tilde{A}}_{j}(k)}\mu(x)\varrho_{j}(x)dx,1-\epsilon)}.
\end{equation}
The vBS updates its traffic loads based on Eq.~(\ref{eq:traffic_updates}).

With the admission control, the RANC is able to restrict the traffic loads in the network to ensure the UA problem being feasible. The relaxed feasible set for the UA problem with admission control is
\begin{align}
\label{eq:relax_feasible_uac}
\mathcal{\hat{F}}=\lbrace &\boldsymbol{\rho}|\rho_{j}=\int_{x \in \mathcal{A}}\mu(x)\varrho_{j}(x)dx,\nonumber\\
&0\leq\rho_{j}\leq 1-\epsilon,\; \sum_{j\in\mathcal{B}}\eta_{j}(x)=1,\nonumber\\
& 0\leq \eta_{j}(x)\leq 1,\;\forall j\in\mathcal{B},\;\forall x \in\mathcal{A}\rbrace
\end{align}
Since $0\leq\mu(x)\leq 1$ is a constant, Lemma \ref{thm:feasible_set} still holds, which means that $\mathcal{\hat{F}}$ is a convex set. Integrating admission control does not change the objective function of the UA problem. Thus, Lemma \ref{thm:convexity} also holds. By applying the similar analysis presented in Sections \ref{subsec:convergence} and \ref{subsec:optimal}, we can prove that the vGALA scheme still enables the convergence of the traffic loads and obtains the optimal solution to the UA problem with the admission control.

%
%
%
%
%

\section{Simulation Results}
\label{sec:simulation}

\begin{table}[ht]
\caption{Channel Model and Parameters}
\centering
\begin{tabular}{l||l}
\hline
Parameters & Value\\
\hline
$PL_{MBS}$ (dB) & $PL_{MBS}=128.1+37.6\log_{10}(d)$\\
$PL_{SCBS}$ (dB) & $PL_{SCBS}=38+10\log_{10}(d)$ \\
Rayleigh fading & 9 $dB$\\
Shadowing fading & 5 $dB$ \\
Antenna gain & 15 $dB$\\
Noise power level & -174 $dBm$ \\
Receiver sensitivity & -123 $dBm$ \\
\hline
\end{tabular}
\label{table:sim_parameters}
\end{table}

\begin{figure*}[ht]
\centering
\hspace*{\fill}
    \begin{subfigure}[b]{0.5\textwidth}
            \includegraphics[width=2.6in]{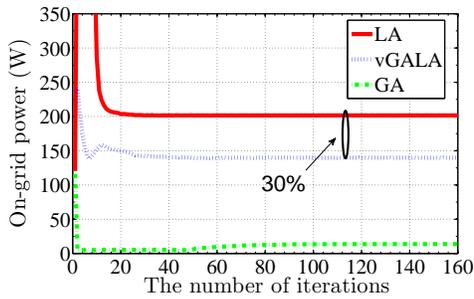}
            \caption{The on-grid power consumption.}
            \label{fig:sim_1_power}
    \end{subfigure}%
    \begin{subfigure}[b]{0.5\textwidth}
            \includegraphics[width=2.6in]{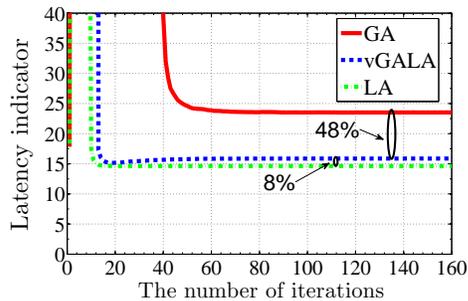}
            \caption{The average traffic delivery latency.}
            \label{fig:sim_1_latency}
   \end{subfigure}\hfill
   \caption{%
       The comparison of different user association scheme ($\theta=0.8$, $\kappa=4$).
     }%
   \label{fig:sim_1_power_latency}
\end{figure*}

We set up system level simulations to investigate the performance of the vGALA scheme for the downlink traffic load balancing in HetNet. In the simulation, three MBSs and seven SCBSs are randomly deployed in a $2000 m \times 2000 m$ area. The traffic arrival in the area follows the Poisson point process with the average arrival rate equaling to 200. The traffic size per arrival is 250~$kbits$. The area is divided into 40000 locations with each location representing a $10 m \times 10 m$ area. The location-based traffic load density is calculated based on the traffic model.
The static power consumption of the MBS and the SCBS are 750~$W$ and 37~$W$, respectively~\cite{Auer:2011:HME}. The load-power coefficient of the MBS and the SCBS are 500 and 4, respectively~\cite{Auer:2011:HME}. The solar cell power efficiency is $17.4\%$~\cite{HIT:Photo}. We assume that the weather condition is the standard condition which specifies a temperature of 25 $^{o}C$, an irradiance of 1000 $W/m^{2}$, and an air mass of 1.5 spectrum. Thus, the green energy generation rate is 174 $W/m^{2}$. The solar panel sizes are randomly selected but ensure the green power generation capacity of MBSs from 750~$W$ to 1300~$W$ while that of SCBS from 37~$W$ to 48~$W$. BSs' energy-latency coefficients are set to be the same.
The total bandwidth is 20~$MHz$ in which 10~$MHz$ is exclusively used by MBSs and the other 10~$MHz$ is allocated to SCBSs. The frequency reuse factor for each system (MBSs and SCBSs) is one. The channel propagation model is based on COST 231 Walfisch-Ikegami~\cite{COST231}. The model and parameters are summarized in Table \ref{table:sim_parameters}. Here, $PL_{MBS}$ and $PL_{SCBS}$ are the path loss between the users and MBSs and SCBSs, respectively. $d$ is the distance between users and BSs.

\subsection{Performance comparison}
We compare the vGALA scheme with a green energy aware (GA) user association scheme and a latency aware (LA) user association scheme. The GA scheme solves the green energy aware problem (GAP) formulated as
\begin{eqnarray}
\label{eq:obj_gap}
\min_{\boldsymbol{\rho}} && \sum_{j \in \mathcal{B}}\max(\rho_{j}-e_{j},0)\\
\label{eq:constraint_gap}
subject\; to: && 0\leq\rho_{j}\leq 1-\epsilon.
\end{eqnarray}
The LA scheme solves the latency aware problem (LAP) as
\begin{eqnarray}
\label{eq:obj_lap}
\min_{\boldsymbol{\rho}} && \sum_{j \in \mathcal{B}}L(\rho_{j})\\
\label{eq:constraint_lap}
subject\; to: && 0\leq\rho_{j}\leq 1-\epsilon.
\end{eqnarray}

\begin{figure*}
\centering
\hspace*{\fill}
    \begin{subfigure}[b]{0.5\textwidth}
            \includegraphics[width=2.6in]{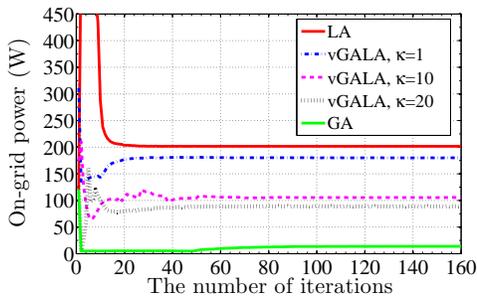}
            \caption{The on-grid power consumption.}
            \label{fig:sim_2_power}
    \end{subfigure}%
    \begin{subfigure}[b]{0.5\textwidth}
            \includegraphics[width=2.6in]{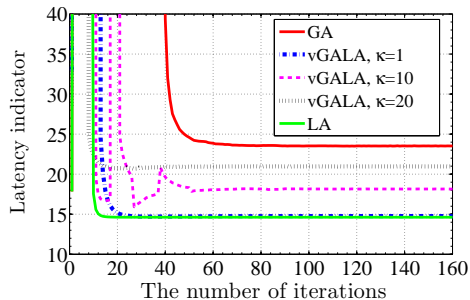}
            \caption{The average traffic delivery latency.}
            \label{fig:sim_2_latency}
   \end{subfigure}\hfill
    \caption{%
       The performance of vGALA with various $\kappa$ ($\theta=1$).
     }%
   \label{fig:sim_2_power_latency}
   \vspace{-16pt}
\end{figure*}

As shown in Figs.~\ref{fig:sim_1_area_comp}, different user association schemes result in different traffic load distribution among BSs. In the figure, the coverage areas of different BSs are filled with different colors\footnote{The white color indicates the coverage area of the second BS.}. A larger coverage area indicates the BS serves more traffic loads. The first, second and third BSs are MBSs and the other BSs are SCBSs. Taking the coverage area of the 5th BS as an example, as compared with the GA scheme (Fig. \ref{fig:ga_area}), the LA scheme significantly reduces the BS's coverage area as shown in Fig. \ref{fig:la_area}. The 5th BS has sufficient green energy. Therefore, the GA scheme will redirect more traffic loads to the BS to minimize the on-grid power consumption. The LA scheme, which does not consider the energy usage, balances the traffic loads among BSs to minimize the average traffic delivery latency. As a result, the LA scheme limits the traffic loads in the BS. Considering both the power consumption and the average traffic delivery latency, the vGALA scheme slightly reduces the BS's coverage area as shown in Fig. \ref{fig:vgala_area} to obtain a trade-off between the on-grid power consumption and the average traffic delivery latency.

Fig. \ref{fig:sim_1_power_latency} shows the trade-off achieved by the vGALA scheme between the on-grid energy consumption and the average traffic delivery latency. Fig. \ref{fig:sim_1_power} shows the on-grid power consumption of the LA, the vGALA, and the GA schemes, respectively. As compared with the LA scheme, the vGALA scheme consumes 30\% less on-grid power. Fig. \ref{fig:sim_1_latency} shows that the average traffic delivery latency of the vGALA scheme is only 8\% more than that of the LA scheme. While the GA scheme significantly reduces the on-grid power consumption, it increases the traffic delivery latency by about 48\% percent as compared with the vGALA scheme. Here, the latency indicator equals to $\sum_{j \in \mathcal{B}}L(\rho_{j})$. The above observation indicates that the vGALA scheme achieves a preferable trade-off: saving 30\% on-grid power at the cost of 8\% increase in the average traffic delivery latency. In addition, as shown in Fig. \ref{fig:sim_1_power_latency}, the vGALA scheme requires about 60 iterations to converge to the optimal solution. On the one hand, it proves that the vGALA scheme converges fast. On the other hand, it indicates that the vGALA scheme avoids the communication overhead over the air interface by virtualizing users and BSs in the RANC to simulate the interactions between users and BSs.

\subsection{Performance adaptation}
The trade-off between the on-grid power consumption and the average traffic delivery latency can be adapted by adjusting $\kappa$ and $\theta$ in the vGALA scheme. Fig. \ref{fig:sim_2_power_latency} shows the performance of the vGALA scheme with different $\kappa$. By varying $\kappa$, the vGALA scheme may act as the LA scheme when $\kappa \rightarrow 0$ and performs like the GA scheme when $\kappa \rightarrow \infty$. As shown in Fig. \ref{fig:sim_3_power_latency}, given $\kappa$, adjusting $\theta$ has a limited performance adaptation. In other words, $\kappa$ defines a performance adaptation range and adjusting $\theta$ can only adapt the performance within the range. As discussed in Section \ref{subsec:energy_latency_trade}, the selection of $\theta$ is determined by the operation strategies of BSs while the value of $\kappa$ is chosen based on network conditions, e.g., the traffic load intensity and the available green energy. However, how to optimize these values is beyond the scope of this paper.

\begin{figure*}
\centering
\hspace*{\fill}
    \begin{subfigure}[b]{0.5\textwidth}
            \includegraphics[width=2.6in]{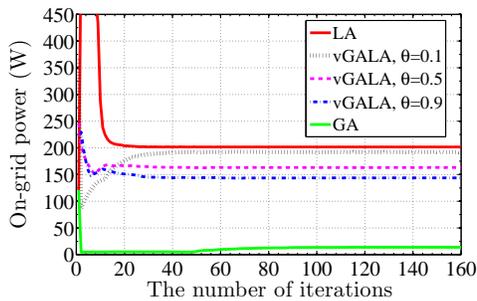}
            \caption{The on-grid power consumption.}
            \label{fig:sim_3_power}
    \end{subfigure}%
    \begin{subfigure}[b]{0.5\textwidth}
            \includegraphics[width=2.6in]{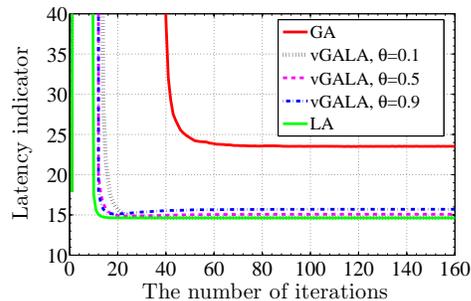}
            \caption{The average traffic delivery latency.}
            \label{fig:sim_3_latency}
   \end{subfigure}\hfill
    \caption{%
       The performance of vGALA with various $\theta$ ($\kappa=4$).
     }%
   \label{fig:sim_3_power_latency}
\end{figure*}

\begin{figure*}
\centering
\hspace*{\fill}
    \begin{subfigure}[b]{0.5\textwidth}
            \includegraphics[width=2.6in]{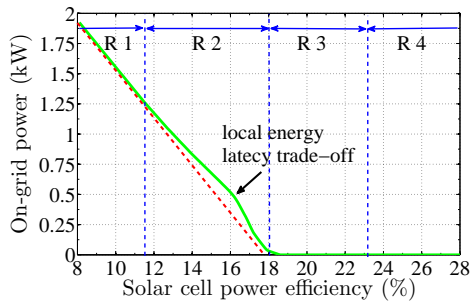}
            \caption{The on-grid power consumption.}
            \label{fig:sim_4_power}
    \end{subfigure}%
    \begin{subfigure}[b]{0.5\textwidth}
            \includegraphics[width=2.6in]{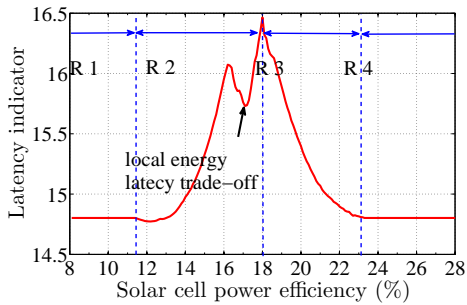}
            \caption{The average traffic delivery latency.}
            \label{fig:sim_4_latency}
   \end{subfigure}\hfill
       \caption{%
       The performance of vGALA versus solar cell power efficiency ($\theta=0.8$, $\kappa=4$).
     }%
   \label{fig:sim_4_power_latency}
   \vspace{-16pt}
\end{figure*}

\subsection{Green energy generation rate evaluation}
The amount of green energy in BSs impacts the performance of the vGALA scheme. In Fig. \ref{fig:sim_4_power_latency}, the x-axis is the solar cell power efficiency. As the solar cell power efficiency enhances, the amount of green energy in BSs will increase. As shown in Fig. \ref{fig:sim_4_power}, the on-grid power consumption of BSs decreases as the solar cell power efficiency increases. This is because more green energy is available in BSs. With the increase of the solar cell power efficiency, the performance on the average traffic delivery latency can be divided into four regions as shown in Fig. \ref{fig:sim_4_latency}. In the first region (R1), all BSs do not have sufficient green energy to offset their static power consumption. As a result, BSs' green traffic capacities are zero. In this condition, the vGALA scheme performs like the LA scheme. In the second region (R2), the green traffic capacities of BSs start to impact the traffic load balancing. The traffic loads will be directed to BSs that have sufficient green energy. Meanwhile, the vGALA scheme avoids to excessively increase the average traffic delivery latency. In the region, green energy is not sufficient in the network. Thus, the major strategy is to trade the average traffic delivery latency for saving on-grid power. However, as the solar power efficiency increases, some BSs may have sufficient green energy and they start trading their green energy for reducing the average traffic delivery latency in the network (the solar power efficiency falls between 16\% and 17\%). This event reflects the local energy-latency trade-off among several BSs. In the third region (R3), as the solar cell power efficiency further increases, the traffic load balancing becomes more flexible with respect to the green energy constraint, which enables the vGALA scheme to further reduce the average traffic delivery latency. In both region R2 and R3, the vGALA scheme determines the trade-off between the on-grid power consumption and the average traffic delivery latency. In the fourth region (R4), all BSs have sufficient green energy to operate with full traffic loads. In other words, the green traffic capacities of all the BSs equal to one. Thus, green energy is no longer a concern in balancing the traffic load and the vGALA scheme acts as the LA scheme.

\begin{figure*}
\centering
\hspace*{\fill}
    \begin{subfigure}[b]{0.5\textwidth}
            \includegraphics[width=2.5in]{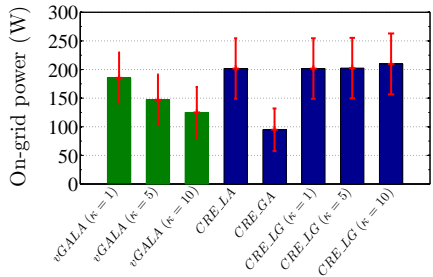}
            \caption{The on-grid power consumption.}
            \label{fig:sim_6_power}
    \end{subfigure}%
    \begin{subfigure}[b]{0.5\textwidth}
            \includegraphics[width=2.5in]{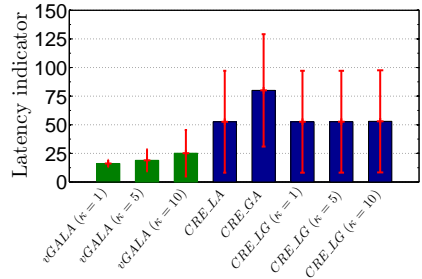}
            \caption{The average traffic delivery latency.}
            \label{fig:sim_6_latency}
   \end{subfigure}\hfill
       \caption{%
       The performance of vGALA versus CRE ($\theta=0.8$).
     }%
   \label{fig:sim_6_power_latency}
\vspace{-16pt}
\end{figure*}

\subsection{Practicality evaluation}
The cell range expansion (CRE) approach is one of the most practical traffic load balancing approach and has been proven to have similar performance as optimal traffic balancing schemes in term of maximizing network utilities~\cite{Andrews:2014:AOLB,Ye:2013:UAL}.
This simulation evaluates the traffic balancing performance of the vGALA scheme and the CRE approach.
For the vGALA scheme, the simulation follows Section \ref{subsec:put_into_practice} to obtain the optimal BS operation status based on the location-based traffic load density of the coverage area and the available green energy. We adopt the two-tier data rate bias approach as the CRE approach and assume that BSs in the same tier have the data rate bias. In the simulation, MBSs are in the first tier while SCBSs are in the second tier. In the data rate bias approach, a user selects the BS to maximize the biased data rate.
\begin{equation}
\label{eq:cre_alg}
b(x)=\arg\max_{j\in\mathcal{B}} Z_{j}r_{j}(x).
\end{equation}
Here, $b(x)$ and $Z_{j}$ are the index of the selected BS and the data rate bias of the $j$th BS. The data rate bias of a MBS is one. The data rate biases are selected for SCBSs to minimize (1) the average traffic delivery latency, (2) the overall on-grid power consumption, and (3) $\psi(\boldsymbol{\rho})$. Define these data rate biases as (1) CRE\_LA (latency-aware), (2) CRE\_GA (green energy-aware), and (3) CRE\_LG (latency and green energy-aware), respectively.

In the simulation, the BS operation status and the data rate biases are calculated based on the location-based traffic load density generated in previous simulations. We randomly generate users' locations using Poisson point process\footnote{The Poisson point process is the same as the Poisson point process used to generate the location-based traffic load density.} with average rate equalling to 200 in the area. The average traffic size per user is 250~$kbits$. We run the simulation 10000 times to evaluate the performance of different approaches in terms of the average traffic delivery latency and the average on-grid power consumption. As shown in Fig. \ref{fig:sim_6_power_latency}, CRE\_GA achieves the minimum on-grid energy consumption among all the schemes. However, the average traffic delivery latency of CRE\_GA is significantly larger than other schemes. As compared with CRE\_LA and CRE\_LG, the vGALA scheme not only saves the on-grid energy consumption but also reduces the average traffic delivery latency. For the vGALA scheme, when $\kappa$ increases, the scheme is to gradually prioritize saving on-grid energy in balancing the traffic loads, as shown in Fig. \ref{fig:sim_6_power}, at the cost of a small increase of the average traffic delivery latency as shown in Fig. \ref{fig:sim_6_latency}. For the CRE\_LG scheme, increasing $\kappa$ does not effectively adjust the energy-latency trade-off as the vGALA scheme does. This indicates the tier-based data rate bias approach may not perform well on jointly optimizing the utilization of green energy and the network utilities.

\section{Conclusion}
\label{sec:conclusion}
In this paper, we have proposed a traffic load balancing framework referred to as vGALA. During the procedure of establishing user association, the vGALA scheme not only considers the network performance, e.g., the average traffic delivery latency, but also adapts to the availability of green energy. Various properties, in particular, convergence of vGALA, have been proven. The vGALA scheme reduces the on-grid power consumption with a little sacrifice of the average traffic delivery latency. The trade-off between the network performance and the on-grid power consumption is adjustable in individual BSs and controllable by the radio access network controller. The vGALA scheme includes both the user side algorithm and the BS side algorithm. To avoid the extra communication overheads, the vGALA scheme, leveraging the SoftRAN architecture, introduces virtual users and vBSs to simulate the interactions between users and BSs thus significantly reducing the information exchanges over the air interface. The extensive simulation results have validated the performance and the practicality of the vGALA scheme.

\appendices
\section{Proof of Lemma \ref{thm:converge_speed}}
\label{app:proof_converge_speed}
Let $\bigtriangleup \boldsymbol{\rho}(k)=\boldsymbol{M}(\boldsymbol{\rho}(k))-\boldsymbol{\rho}(k)$. The termination condition of the BS side algorithm (Alg. \ref{alg:bsa_alg}) can be expressed as
\begin{align}
\label{eq:termination_a}
&\psi(\boldsymbol{\rho}(k+1))\nonumber\\
\;\;\;&\leq \psi(\boldsymbol{\rho}(k))+
\varsigma(1-\delta(k))\bigtriangledown\psi(\boldsymbol{\rho})^{\top}\bigtriangleup \boldsymbol{\rho}(k)
\end{align}
Since $\bigtriangleup \boldsymbol{\rho}(k)$ is a descent direction of $\psi(\boldsymbol{\rho}(k))$, $\bigtriangleup \boldsymbol{\rho}(k)$ can be replaced by $-\bigtriangledown\psi(\boldsymbol{\rho})$. Thus, the termination condition of Alg. \ref{alg:bsa_alg} can be rewritten as
\begin{align}
\label{eq:termination_b}
\psi(\boldsymbol{\rho}(k+1))\leq \psi(\boldsymbol{\rho}(k))-\varsigma(1-\delta(k))\Vert\bigtriangledown\psi(\boldsymbol{\rho})\Vert^{2}_{2}
\end{align}
Next, we will prove that the termination condition is satisfied whenever $0\leq 1-\delta(k)\leq 1/Q$. Since $\psi(\boldsymbol{\rho})\preceq Q\boldsymbol{I}$, we can derive, according to~\cite{Boyd:2004:CVX},
\begin{align}
&\psi(\boldsymbol{\rho}(k+1))\leq \nonumber\\ &\psi(\boldsymbol{\rho}(k))+(\frac{(1-\delta(k))Q}{2}-1)(1-\delta(k))\Vert\bigtriangledown\psi(\boldsymbol{\rho})\Vert^{2}_{2}.
\end{align}
When $0\leq 1-\delta(k)\leq 1/Q$, $\frac{(1-\delta(k))Q}{2}-1 \leq -1/2$. Therefore,
\begin{align}
&\psi(\boldsymbol{\rho}(k+1))\leq \nonumber\\ &\psi(\boldsymbol{\rho}(k))-\frac{(1-\delta(k))}{2}\Vert\bigtriangledown\psi(\boldsymbol{\rho})\Vert^{2}_{2}.
\end{align}
Since $0<\varsigma<0.5$, $-\frac{(1-\delta(k))}{2}\leq -(1-\delta(k))\varsigma$. Thus, we have
\begin{align}
\psi(\boldsymbol{\rho}(k+1))\leq \psi(\boldsymbol{\rho}(k))-(1-\delta(k))\varsigma\Vert\bigtriangledown\psi(\boldsymbol{\rho})\Vert^{2}_{2},
\end{align}
which satisfies the termination condition of Alg. \ref{alg:bsa_alg}. Therefore, Alg. \ref{alg:bsa_alg} terminates either with $\delta(k)=0$ or $(1-\delta(k))$ equaling to a value that is larger than $\xi/Q$.

In the first case ($\delta(k)=0$), we have
\begin{align}
\psi(\boldsymbol{\rho}(k+1))\leq \psi(\boldsymbol{\rho}(k))-\varsigma\Vert\bigtriangledown\psi(\boldsymbol{\rho})\Vert^{2}_{2}.
\end{align}
In the second case ($(1-\delta(k))\geq \xi/Q$), we can derive that
\begin{align}
\psi(\boldsymbol{\rho}(k+1))\leq \psi(\boldsymbol{\rho}(k))-\varsigma\xi/Q\Vert\bigtriangledown\psi(\boldsymbol{\rho})\Vert^{2}_{2}.
\end{align}
Thus,
\begin{align}
\psi(\boldsymbol{\rho}(k+1))\leq \psi(\boldsymbol{\rho}(k))-\min\{\varsigma,\varsigma\xi/Q\}\Vert\bigtriangledown\psi(\boldsymbol{\rho})\Vert^{2}_{2}.
\end{align}
Subtracting $\psi(\boldsymbol{\rho}^{*})$ from both side, we have
\begin{align}
&\psi(\boldsymbol{\rho}(k+1))-\psi(\boldsymbol{\rho}^{*})\leq \nonumber\\ &\psi(\boldsymbol{\rho}(k))-\psi(\boldsymbol{\rho}^{*})-\min\{\varsigma,\varsigma\xi/Q\}\Vert\bigtriangledown\psi(\boldsymbol{\rho})\Vert^{2}_{2}.
\end{align}
Since $q\boldsymbol{I} \preceq\bigtriangledown^{2}\psi(\boldsymbol{\rho})$, according to~\cite{Boyd:2004:CVX},
\begin{align}
\Vert\bigtriangledown\psi(\boldsymbol{\rho}(k))\Vert^{2}_{2}\geq 2q(\psi(\boldsymbol{\rho}(k))-\psi(\boldsymbol{\rho}^{*})).
\end{align}
Combining these together, we can derive that
\begin{align}
&\psi(\boldsymbol{\rho}(k+1))-\psi(\boldsymbol{\rho}^{*})\leq \nonumber\\
&(1-\min\{2q\varsigma,2q\varsigma\xi/Q\})(\psi(\boldsymbol{\rho}(k))-\psi(\boldsymbol{\rho}^{*})).
\end{align}
Let $z=1-\min\{2q\varsigma,2q\varsigma\xi/Q\}$ and apply the inequality recursively, we find that
\begin{align}
&\psi(\boldsymbol{\rho}(k+1))-\psi(\boldsymbol{\rho}^{*})\leq z^{k}(\psi(\boldsymbol{\rho}(1))-\psi(\boldsymbol{\rho}^{*})).
\end{align}
Let $z^{k}(\psi(\boldsymbol{\rho}(1))-\psi(\boldsymbol{\rho}^{*}))=\epsilon$; we derive that the number of iteration required to achieve $\epsilon$ optimality is
\begin{equation}
k=\frac{\log((\psi(\boldsymbol{\rho}(1))-\psi(\boldsymbol{\rho}^{*}))/\epsilon)}{\log{1/z}}.
\end{equation}
\bibliographystyle{IEEEtran}
\bibliography{mybib}

\end{document}